\documentclass{article}
\usepackage[margin=1in]{geometry}
\usepackage[hidelinks,pdfencoding=auto]{hyperref}

\usepackage{times}

\usepackage{soul}
\usepackage{url}
\usepackage[utf8]{inputenc}
\usepackage[small]{caption}
\usepackage{graphicx}
\usepackage{amsmath}
\usepackage{booktabs}
\urlstyle{same}
\usepackage{enumerate}
\usepackage{authblk}
\usepackage{comment}

\usepackage{tikz}
\usetikzlibrary{arrows, decorations.pathmorphing}
\usetikzlibrary{arrows.meta, positioning, decorations.markings}
\usepackage[ruled]{algorithm2e}
\usepackage{amssymb}
\usepackage{amsthm} 
\usepackage{multicol}

\tikzset{
	photon/.style={decorate, decoration={snake}, draw=red}}

\newcommand{\MIS}{{\sc MIS}}

\newcommand{\NP}{{\sf NP}}

\newcommand{\gmis}{{\sc generalized maximum independent set}}

\newcommand{\CMM}{{\sc CMM}}
\newcommand{\E}{\mathop{\mathbb{E}}}

\newcommand{\Ca}{C}
\newcommand{\Cp}{C}

\newtheorem{theorem}{Theorem}
\newtheorem*{theorem*}{Theorem}

\newtheorem{lemma}{Lemma}

\newtheorem{proposition}{Proposition}

\theoremstyle{definition}
\newtheorem{model}{Model}
\newtheorem{fact}{Fact}

\theoremstyle{remark}
\newtheorem{remark}{Remark}

\newtheoremstyle{example}{}{}{}{}{\bfseries}{\smallskip}{\newline}{}
\theoremstyle{example}

\newif\ifFull
\Fulltrue

\newenvironment{onlyfullversion}{}{}
\newenvironment{onlyshortversion}{}{}

\newif\ifShort

\ifFull 
\Shortfalse
\includecomment{onlyfullversion}
\excludecomment{onlyshortversion}
\else
\Shorttrue
\excludecomment{onlyfullversion}
\includecomment{onlyshortversion}
\fi
\newcommand*\samethanks[1][\value{footnote}]{\footnotemark[#1]}

\title{Matchings with Group Fairness Constraints: Online and Offline Algorithms}

\author[1]{Govind S. Sankar}
\author[2]{Anand Louis\footnote{These three authors contributed equally.}}
\author[1]{Meghana Nasre\samethanks}
\author[3,4]{Prajakta Nimbhorkar\samethanks}
\affil[1]{Indian Institute of Technology Madras, Chennai}
\affil[2]{Indian Institute of Science, Bangalore}
\affil[3]{Chennai Mathematical Institute, Chennai}
\affil[4]{UMI ReLaX}
\affil[ ]{	govindbose@gmail.com,
	anandl@iisc.ac.in,
	meghana@cse.iitm.ac.in,
	prajakta@cmi.ac.in}
\date{}

\begin{document}

	\maketitle 
	
	\begin{abstract}
		We consider the problem of assigning items to platforms in the presence of group fairness constraints. In the input, each item belongs to certain categories, called {\em classes} in this paper. Each platform specifies the group fairness constraints through an upper bound on the number of items it can serve from each class. Additionally, each platform also has an upper bound on the total number of items it can serve. The goal is to assign items to platforms so as to maximize the number of items assigned while satisfying the upper bounds of each class. In some cases, there is a revenue associated with matching an item to a platform, then the goal is to maximize the revenue generated.
		
		This problem models several important real-world problems like ad-auctions, scheduling, resource allocations, school choice etc.We also show an interesting connection to computing a generalized maximum independent set on hypergraphs and ranking items under group fairness constraints.

		We show that if the classes are arbitrary, then the problem is \NP-hard and has a strong inapproximability. We consider the problem in both online and offline settings under natural restrictions on the classes. Under these restrictions, the problem continues to remain \NP-hard but admits approximation algorithms with small approximation factors. We also implement some of the algorithms. Our experiments show that the algorithms work well in practice both in terms of efficiency and the number of items that get assigned to some platform. 
	\end{abstract}

	\newcommand{\Anote}[1]{{\color{red} [#1 --- Anand.]}}
\newcommand{\pnote}[1]{{\color{cyan} [#1 -- Prajakta.]}}
\newcommand{\mnote}[1]{{\color{blue} [#1 -- Meghana.]}}
\newcommand{\gnote}[1]{{\color{blue} [#1 -- Govind.] }}
\newcommand{\gedit}[1]{#1}

\newcommand{\Item}{item}
\newcommand{\Items}{items}
\newcommand{\Platform}{platform}
\newcommand{\Platforms}{platforms}
\newcommand{\cC}{\mathcal{C}}

\section{Introduction}

Graph matching is a fundamental problem in graph theory and theoretical computer science 
that has been studied extensively over the years. 
Computing the maximum matching in bipartite graphs, both in the online and the offline setting
is an important  building block in many applications in allocation problems such as
ad-auctions \cite{mehta_online_survey,mehta_online_adwords}, scheduling \cite{venkat_scheduling}, resource allocation \cite{halabian_resourceallocation},  school choice~\cite{abdulkadiroglu2003} etc. 
Since the notation used in these various problems differ, we use the general terms {\em \Items} and {\em \Platforms} to refer to the two parts of the bipartite graph. 
In practice, items may be classified based on different properties and hence may belong to certain groups or classes.
Modeling the allocation problems as a vanilla matching problem  seeks
to optimize the cost of the solution alone and may inadvertently be ``unfair'' to some
classes of \Items. 
Necessitated by the need to be fair and unbiased towards any group of \Items\ 
in the input, there has been a lot of recent work studying algorithms for
various problems augmented with fairness constraints,
such as \cite{vishnoi_fairness,KMM15,CHRG16,BCZSK16}).

In this paper, we enforce {\em group fairness} through constraints that place an upper bound on the number of items that can be
matched from a particular class to a platform. 
\gedit{We note that group fairness constraints usually involve both upper and
lower bounds. 
This is incompatible with the practical applications that we have in mind, namely 
ad-allocation and course allocation problems. 
For this reason, we focus only on upper bounds. }
We formally define the problem as follows. 

\paragraph{Classified maximum matching (\CMM)}
We have a set $A$ of items and a set $P$ of platforms, and these sets form 
the two parts of a bipartite graph. The presence of an edge $(a,p)$ indicates that item $a$ 
can be matched to platform $p$. Let $N(p)$ denote the neighborhood of $p$.
\textcolor{black}{
Each platform $p$  has a collection of classes $\cC_p \subseteq 2^{N(p)}$, i.e., each item
in $N(p)$ may belong to some of the classes in $\cC_p$.
Each class $C \in \cC_p$ has an associated quota $q_p^C$ denoting the maximum number
of items from $C$ that can be assigned to $p$.
In addition, each platform $p$ has a quota $q_p$, which is an upper bound on the total number of items
it can serve.}
Our goal is to compute an assignment of items to platforms so as to maximize the number 
of items assigned, while satisfying all the quotas.

Classes allow platforms to specify group fairness constraints -- for instance the classes can be seen as properties or categories
and the quotas impose the constraints that not too many items from one category are assigned to a platform.
These types of fairness constraints have been studied in many practical applications.
In ~\cite{abdulkadiroglu2003}, the authors address the school choice problem where fairness constraints are
imposed to achieve racial, ethnic, and gender balance. In the assignment of medical residents to hospitals in Japanese Residency Matching Program (JRMP),
regional quotas are introduced to ensure fairness amongst urban and rural hospitals \cite{kamada2012_strategyproof,kamada2015_distributional}. 
Huang~\cite{Huang10} motivates classifications from the perspective of enabling diversity in academic hiring. Apart from matchings, group fairness 
constraints have also been studied for many other problems like the knapsack problem \cite{anand_knapsack}, and clustering problems \cite{bera_fair_clustering}, to name a few. 

\gedit{
	Some recent pre-prints have discussed fairness in matching problems. 
	Soriano and Bonchi \cite{soriano2020fairbydesign} study a different notion of 
	individual fairness that they call maxmin-fairness. Their goal is to 
	output a solution such that the \emph{satisfaction} of one agent cannot be 
	improved without making another agent worse-off. 
	Ma and Xu \cite{ma2020grouplevel} measure fairness by the 
	ratio of expected number of agents matched from a particular group to the expected number of agents from that group and 
	their goal is to maximize the minimum of this ratio over all groups. 
	Basu \textit{et~al}. \cite{basu2020framework} also measure fairness based on metrics involving 
	the ratio of agents across groups and the utility they provide. 
	While qualitatively similar, our constraints can be seen as being 
	orthogonal to such notions of fairness.
}




In different applications, the fairness constraints can have a different structure. In \cite{kamada2012_strategyproof,kamada2015_distributional}, each hospital belongs to
exactly one region, hence fairness constraints partition the set of hospitals. On the other hand, in the school choice problem in \cite{abdulkadiroglu2003}, a student belongs to multiple constraints. The structure of the constraints crucially affects the computational complexity
of finding a fair allocation. This has been illustrated in the context of bipartite matchings where one or  both sides
of the bipartition express preferences over the other side. Huang~\cite{Huang10} and Fleiner and Kamiyama~\cite{FK16}, address the \CMM\ problem when both sides
of the bipartition have preferences over each other and the notion of optimality is {\em stability}, whereas 
\cite{nasre_classified_matchings} study the \CMM\ problem 
under the optimality criteria of {\em rank-maximality} and {\em popularity}. In all these cases, it has been shown that the \CMM\ 
problem can be efficiently solved if the constraints have a {\em laminar}\footnote{
A family $\mathcal{C}$ of subsets of a set $S$ is laminar if, for every pair of sets $X, Y \in \mathcal {C}$,
either $X\subseteq Y$ or $Y\subseteq X$ or $X\cap Y=\emptyset$.}
structure, and is \NP-hard in general \cite{nasre_classified_matchings}. 
Such a restriction has been considered before in the literature, such as in the hospital-resident problem \cite{kamada2012_strategyproof} or the college admissions problem \cite{biro2010_college_admissions,goto2016_regional}.
However, a finer relation between the structure 
of the class constraints and the computational efficiency has not received much attention in literature. In this paper, we address this issue by focusing on a quantification of non-laminarity in the classes and its effect on computational efficiency.
We strengthen the  hardness results in \cite{nasre_classified_matchings} and obtain new approximation algorithms for the \CMM\ problem in the offline setting.

Next, we turn our attention to the online version of the problem.  Online matching problems have numerous practical applications, such as in 
ad-allocations \cite{mehta_online_adwords}, resource allocations \cite{devanur_online_allocation}, etc. 
See \cite{mehta_online_survey} for a survey on online bipartite matchings. Fairness has also been studied in online settings such as online learning \cite{gillen2018_online_learning} or ride-hailing platforms \cite{suhr2019_ride_hailing}. 
`Fairness' in another form has been considered previously in the online literature. For example, the `frequency caps' mentioned in \cite{mehta_online_iid} places a restriction on the total number of ads that are shown to the same user, or users from a particular demographic.
We study some natural online versions of the \CMM\ problem. We first show that one of our approximation algorithms for the offline non-laminar case also works as an online algorithm, regardless of the input model. For the setting where we restrict classes to be laminar, we show that existing algorithms for online bipartite matching carry over to our setting.

\subsection{Models}

We study \CMM~problem in various settings.
In practice, assigning an item $a$ to a platform $p$ may generate a revenue, which can be modelled as the weight of the edge $(a,p)$. In such a case the goal of the weighted \CMM~problem  is to compute an assignment of items to platforms so as to maximize the total weight of edges in the matching, while satisfying the quotas of all the classes.
We now formally define our models. 
\begin{model}[Many-to-one]\label{mod:mo}
This is the setting described earlier. In this setting, items can match to at most one platform.
\end{model}

\begin{model}[Many-to-many]\label{mod:mm}
This is a more general setting in which items may be matched to multiple platforms.
In addition to the classes of platforms, each item $i$ also has a collection of classes $\cC_i \subseteq 2^{N(i)}$, i.e., each platform
in $N(i)$ belongs to some of the classes in $\cC_i$, and the items also have quotas for their classes.
This model arises in scenarios like course allocation, where students may be allotted multiple courses subject to various restrictions. Courses may have restrictions over the number of students allotted to it from each department or from each batch. 
\end{model}

In the setting of online matchings, the platforms are available offline and the items arrive online. When an item $a\in A$ arrives online, its neighbours in $P$, and the classes that it participates in are revealed. It must be immediately decided if we match $a$ to some platform and any edges matched cannot be unmatched later. 
In the literature, the order in which the items arrive has been studied in various models. 
In the adversarial model, the items can arrive in an arbitrary order. 
We study a natural online arrival model for the \Items, called the Random input model. 
See \cite{mehta_online_survey} for a survey of other work on such models. 

\begin{model}[Random input \cite{mehta_online_random}]\label{mod:random}
In this setting, there is an underlying graph $G=(A\cup P,E)$. The vertices of $A$ arrive according to a permutation chosen uniformly at random.
\end{model}

\subsection{Our results}
\label{sec:results}

In most applications, an item typically belongs to small number of classes, hence we first study this setting. For example, \cite{abdulkadiroglu2003} discusses the Boston student assignment mechanism which divides students into two categories based on whether they already have a sibling in the school and whether they are within walking distance to the school. 
Similarly, in a faculty hiring scenario, the number of classes (which would correspond to specializations) is independent of the number of applicants. 
For the scenario when there are constant number of classes we show the following result.
\begin{theorem}[Informal version of Theorem~\ref{thm:constant_classes_formal}]
\label{thm:constant-classes}
The \CMM~ problem can be solved in polynomial time if there is a constant number of platforms, each with a constant number of classes. 
This leads to a $\frac{1}{2}$-approximation algorithm for an arbitrary number of platforms, each with a constant number of classes. 
\end{theorem}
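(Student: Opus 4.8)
The plan is to prove the two assertions separately, with the exact algorithm for constantly many platforms serving as the engine behind the $\frac12$-approximation. Throughout I work with the many-to-one setting of Model~\ref{mod:mo} and the cardinality objective (maximize the number of matched items).

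For the exact algorithm, the key observation is that when each platform $p$ has at most $k$ classes with $k$ constant, every item $a\in N(p)$ interacts with $p$ only through its \emph{membership vector} in $\{0,1\}^{|\cC_p|}$, which records the classes of $\cC_p$ containing $a$; there are at most $2^k$ such vectors. With a constant number $m$ of platforms, I would assign each item a \emph{global type}, namely the tuple listing, for every platform, either its membership vector (if the item is a neighbour) or the symbol $\bot$ (if not). The number of distinct global types is at most $(1+2^k)^m=:T$, a constant, and two items of the same global type are fully interchangeable, since the type fixes both the neighbourhood and all membership vectors. I would then set up an integer program with variables $x_{t,p}$ equal to the number of type-$t$ items assigned to platform $p$, subject to supply constraints $\sum_p x_{t,p}\le s_t$ (where $s_t$ is the number of items of type $t$), platform quotas $\sum_t x_{t,p}\le q_p$, and, for each class $C\in\cC_p$, the class quota obtained by summing $x_{t,p}$ over the types whose membership vector at $p$ contains $C$, bounded by $q_p^C$; the objective is $\sum_{t,p}x_{t,p}$. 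This program has only $Tm$ variables, a constant, so it is an integer program in fixed dimension and hence solvable in polynomial time (by Lenstra's algorithm, or simply by enumerating the at most $(n+1)^{Tm}$ candidate value-vectors, which is polynomial because $Tm$ is constant). An optimal count vector is turned into an actual assignment by handing out items of each type arbitrarily; feasibility is immediate because same-type items are interchangeable and the supply constraints ensure each item is used at most once.

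For the $\frac12$-approximation with arbitrarily many platforms but still a constant number of classes each, I would instantiate the exact algorithm in the single-platform case $m=1$ to get a subroutine that, given any set $S$ of available items, returns a maximum-cardinality feasible assignment of $S$ to one fixed platform. The algorithm processes the platforms in arbitrary order; for each platform it runs this subroutine on the items still unmatched, commits the returned assignment, and deletes those items. To show the resulting assignment $M$ obeys $|M|\ge|M^*|/2$ for an optimum $M^*$, I would charge each pair $(a,p)\in M^*$ as follows. If $a$ was already matched by $M$ at the moment $p$ was processed, charge the pair to the (unique) edge of $M$ incident to $a$; each such edge is charged at most once, since $a$ is matched at most once in $M^*$. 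Otherwise $a$ was free when $p$ was processed, so the free items that $M^*$ assigns to $p$ constitute a feasible single-platform assignment available to the subroutine, whence their number is at most the number of items $M$ matches to $p$. Summing the two types of charges yields $|M^*|\le|M|+|M|$, i.e.\ the bound $\frac12$.

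The step I expect to need the most care is the charging argument. One must verify that restricting $M^*$ to the items that happened to be free when a given platform was processed still yields a \emph{feasible} single-platform assignment — this holds because any subset of a feasible assignment satisfies all quotas — and that the two charges account for each pair of $M^*$ exactly once. It is also worth flagging why the obvious item-by-item greedy is insufficient: overlapping class quotas can drive a naively maximal assignment down to a $\frac1k$ fraction of the optimum (e.g.\ one item lying in $k$ unit-quota classes blocking $k$ singleton items), and it is precisely the per-platform \emph{optimal} subroutine, supplied by the constant-classes exact algorithm, that repairs the ratio to $\frac12$.
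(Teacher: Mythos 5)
Your proposal is correct and takes essentially the same route as the paper: an integer program over item ``types'' (membership patterns in the classes) in a constant number of variables, solved via Lenstra's fixed-dimension ILP algorithm, for the exact part; and a sequential pass over platforms using the exact single-platform subroutine on the residual items for the $\tfrac12$-approximation. Your charging argument is precisely the $\alpha=1$ instantiation of the paper's Lemma~\ref{lemma:single-to-mul-platform} (which proves the general $\tfrac{\alpha}{1+\alpha}$ bound by the same partition of $OPT$ into already-matched and still-free items).
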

Now we turn to a more general setting where the number of classes is arbitrary
and exploit the structural relation amongst the classes.
We know  from \cite{nasre_classified_matchings} that when
the classes of every platform form exactly one laminar family then the \CMM\ problem is solvable in polynomial time.
We prove the following theorem. 
{
\begin{theorem}\label{thm:approx-delta-lam}
	There is a polynomial-time algorithm achieving an approximation ratio of $\frac{1}{\Delta+1}$ 
	for the many-to-one setting (Model \ref{mod:mo}) where each item belongs to at most $\Delta$ laminar families of classes per platform. This generalizes to the \gedit{weighted} many-to-many setting (Model \ref{mod:mm}) where for each edge $(a,p)$, the classes of $a$ and $p$ that contain 
	$(a,p)$ can be partitioned into $\Delta+1$ laminar families.
\end{theorem}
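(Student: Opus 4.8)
The plan is to cast the set of feasible assignments as an independence system on the edge set $E$ and to show that this system is $(\Delta+1)$-extendible; the $\frac{1}{\Delta+1}$ guarantee then follows from the classical fact that the greedy algorithm is a $\frac{1}{k}$-approximation for maximum-weight independent sets in any $k$-system, together with the observation that a $(\Delta+1)$-extendible system is a $(\Delta+1)$-system. Concretely, I would regard every quota as a bound on a set of edges: a platform class $C \in \cC_p$ becomes the edge set $\{(a',p) : a' \in C\}$ with bound $q_p^C$, the platform's overall quota $q_p$ becomes the set of all edges incident to $p$, and---in the many-to-one model---the requirement that an item match at most one platform becomes, for each item $a$, the set of all edges incident to $a$ with bound $1$. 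A matching $M \subseteq E$ is declared independent exactly when it respects every such bound. This family is clearly downward closed, and the greedy algorithm (process edges in nonincreasing weight order, keep an edge whenever the current set stays feasible) runs in polynomial time, since feasibility of a candidate set is checked by counting.

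The heart of the argument is the extendibility lemma: given feasible $A \subseteq B$ and an edge $e=(a,p) \notin B$ with $A + e$ feasible, I must exhibit $Z \subseteq B \setminus A$ with $|Z| \le \Delta+1$ such that $(B \setminus Z)+e$ is feasible. Inserting $e$ can only violate bounds of edge-classes that contain $e$, and by hypothesis these classes split into at most $\Delta+1$ laminar families: in the many-to-one model, the at most $\Delta$ platform-side families of $p$ to which $a$ belongs---with the overall quota $q_p$ absorbed as the top set of one of them---together with the single item-side family of $a$. The key structural observation is that within one laminar family the classes containing the fixed edge $e$ are pairwise nested, hence form a chain; so among them the violated ones (those exactly at their bound in $B$) also form a chain with a unique smallest member $C^{\min}$. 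Since $A+e$ is feasible, $C^{\min}$ has slack in $A$ but is tight in $B$, so $B \setminus A$ contains an edge $f$ lying in $C^{\min}$; removing this single $f$ decreases the count of $C^{\min}$ and of all its supersets---which include every violated class of that family---thereby repairing the whole family at the cost of one deletion. Taking one such $f$ per family yields $Z$ with $|Z| \le \Delta+1$, and since deletions never create new violations, $(B\setminus Z)+e$ is feasible, establishing $(\Delta+1)$-extendibility.

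For the weighted many-to-many model the same proof goes through once classes are viewed as edge sets: the classes of $a$ and of $p$ containing $(a,p)$ are, by hypothesis, partitioned into $\Delta+1$ laminar families (the item-side families now being edge sets of the form $\{(a,p'):p' \in D\}$ for $D \in \cC_a$), and the chain/one-deletion argument applies to each; there is no longer a separate assignment constraint, because the item quotas are themselves ordinary classes. The greedy guarantee for $k$-systems is insensitive to edge weights, so it delivers the claimed $\frac{1}{\Delta+1}$-approximation in both settings. I expect the main obstacle to be the bookkeeping in the extendibility proof rather than any deep idea: one must verify (i) that a common edge forces laminar classes into a chain, (ii) that the repairing edge can always be found inside $B \setminus A$---which is exactly where feasibility of $A+e$ is used---and (iii) that the overall platform and item quotas can be folded into the laminar families so the family count is $\Delta+1$ and not $\Delta+2$. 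A secondary point to state cleanly is the reduction from $(\Delta+1)$-extendibility of the system to the greedy approximation ratio, for which I would invoke the standard analysis of greedy on $k$-independence systems.
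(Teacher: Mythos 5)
Your proposal is correct, and the algorithm and the central combinatorial observation coincide with the paper's: greedy on the edge set, with the insight that laminar classes sharing a fixed element form a chain, so one deletion per laminar family repairs every violated quota in that family (and the repairing element exists in $B\setminus A$ precisely because the smaller set plus the new element is feasible). Where you diverge is in how this exchange property is converted into the approximation ratio. The paper proves a bespoke induction (its Lemma on maximal sets: if $S\cup B$ is maximal then any feasible $C\supseteq S$ disjoint from $B$ has $|C|\le |S|+\Delta|B|$), instantiates it with $S=ALG\cap OPT$, $B=ALG\setminus OPT$, $C=OPT$ to get the single-platform $\frac{1}{\Delta}$ bound for \gmis, and then handles multiple platforms by a separate reduction that creates one item per edge and folds the ``each item matches at most one platform'' constraint in as an extra laminar family; weights are dispatched in a one-line remark. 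You instead prove $(\Delta+1)$-extendibility of the independence system on $E$ and invoke the standard theorem that greedy is a $\frac{1}{k}$-approximation for maximum-weight independent sets in $k$-extendible systems. Your packaging buys a cleaner treatment of the weighted and many-to-many cases (they come for free from the $k$-system theorem, with no separate reduction or remark needed) at the cost of importing an external result; the paper's route is self-contained and also yields the slightly stronger $\frac{1}{\Delta}$ statement for the single-platform/\gmis\ case (Proposition~\ref{prop:gmis}), which your edge-set formulation recovers only if you note that with a single platform the item-side family is vacuous. The one bookkeeping point you flag --- absorbing the overall quota $q_p$ as the top set of an existing platform-side family --- is handled correctly and is in fact glossed over in the paper.
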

}
The above result is applicable in scenarios like ad-allocation where the number of classes can be arbitrary, but any ad belongs to
a few of them.
Complementing this, we also obtain hardness results for computing the optimal \CMM. 
\begin{proposition}
\label{prop:hardness-indset1}
\begin{enumerate}[(i)]
\item \label{itm:general} \CMM\  cannot be 
approximated to a factor of $n^{\epsilon-1}$ for any $\epsilon>0$ unless ${\sf P}=\NP$, where $n=|A|$, even when there is 
a single platform and all edge weights are one.
\item \label{itm:bounded-degree} When there is a single platform, and additionally, each item appears in at most $\Delta$ classes, the problem
is {\sf NP}-hard to approximate within a factor $O\left(\frac{\log^2 \Delta}{\Delta}\right)$.
\end{enumerate}
\end{proposition}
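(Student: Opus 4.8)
The plan is to reduce from \MIS{}, exploiting that a single platform equipped with one class per constraint is exactly an independent-set instance. Given a graph $G=(V,E)$, I would build a \CMM{} instance with a single platform $p$ and one item $a_v$ for every vertex $v\in V$, all adjacent to $p$. For each edge $e=\{u,w\}\in E$ I introduce a class $C_e=\{a_u,a_w\}$ with quota $q_p^{C_e}=1$, and set the platform quota $q_p=|V|$ so that it never binds. A set $S$ of items is then feasible precisely when $|S\cap C_e|\le 1$ for every edge $e$, i.e.\ when the corresponding vertex set spans no edge — exactly an independent set. Hence feasible assignments of value $k$ are in bijection with independent sets of size $k$, the optimum of the \CMM{} instance equals the independence number of $G$, and $n=|A|=|V|$. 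Note that all edge weights are one, as required.

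For part~(i), this bijection is value-preserving, so any polynomial-time algorithm approximating \CMM{} within a factor $f(n)$ yields the same factor for \MIS{} on $n$-vertex graphs. By the inapproximability of \MIS{} (Håstad, derandomized by Zuckerman), \MIS{} cannot be approximated within $n^{1-\epsilon}$ for any $\epsilon>0$ unless ${\sf P}=\NP$; equivalently no polynomial-time algorithm achieves ratio $n^{\epsilon-1}$. The reduction transfers this bound verbatim, which is the claim.

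For part~(ii), observe that item $a_v$ lies in exactly $\deg_G(v)$ classes, one per incident edge, so if $G$ has maximum degree $\Delta$ then every item appears in at most $\Delta$ classes. Applying the same value-preserving reduction to the hardness of \MIS{} on degree-$\Delta$ graphs — NP-hard to approximate within $O\!\left(\tfrac{\log^2\Delta}{\Delta}\right)$ by Austrin, Khot and Safra, building on Trevisan — yields the stated bound.

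The main thing to check carefully is that the reduction preserves objective values exactly rather than merely up to constants, so that the strong $n^{1-\epsilon}$ and degree-parametrized factors survive the transfer; this is immediate here since feasible assignments and independent sets coincide as sets. The only other subtlety is matching the two notions of ``degree'': the per-item class count in \CMM{} must equal the graph degree, which the one-class-per-edge encoding guarantees. Everything else (polynomial instance size, unit weights) is routine, so I expect no genuine obstacle beyond quoting the two \MIS{} hardness results in their sharpest forms.
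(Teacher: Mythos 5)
Your reduction is exactly the one in the paper: one item per vertex, one quota-$1$ class per edge, a single platform, and then quoting the $n^{1-\epsilon}$ hardness of \MIS\ for part~(\ref{itm:general}) and the bounded-degree \MIS\ hardness for part~(\ref{itm:bounded-degree}), with the observation that the per-item class count equals the vertex degree. The only difference is a citation detail (the paper attributes the degree-$\Delta$ bound to a different reference), which does not affect correctness.
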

The proof of Proposition \ref{prop:hardness-indset1} follows from a 
reduction from the {\sf maximum independent set} problem.

\paragraph{Online algorithms.} 
We first remark that our algorithm from Theorem \ref{thm:approx-delta-lam} works as an online algorithm in the \gedit{unweighted case}, even when the input is adversarially chosen.
\begin{theorem}\label{thm:approx-online-delta-lam}
	There is a polynomial-time online algorithm achieving, in any input model, a competitive ratio of $\frac{1}{\Delta+1}$ for the many-to-one setting (Model \ref{mod:mo}) where each item belongs to at most $\Delta$ laminar families of classes per platform. The algorithm extends to the many-to-many setting (Model \ref{mod:mm}) where the classes of $a$ and $p$ containing each edge $(a,p)$ can be partitioned into $\Delta+1$ laminar families.
\end{theorem}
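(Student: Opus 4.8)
The plan is to observe that the offline algorithm of Theorem~\ref{thm:approx-delta-lam} is a \emph{greedy} procedure whose guarantee depends only on the maximality of the solution it produces, and then to verify that running this procedure in arrival order constitutes a valid online algorithm with an order-independent analysis. Recall that in the unweighted case the algorithm processes items one at a time and, for the current item $a$, irrevocably matches it to any platform $p$ (in the many-to-one case) for which the edge $(a,p)$ can be added without violating the overall quota $q_p$ or any class quota $q_p^C$; if no such platform exists, $a$ is left unmatched. Since this procedure touches each item exactly once and never unmatches an edge, it is literally an online algorithm for the item-arrival model: when $a$ arrives, its neighbourhood and the classes it participates in are revealed, which is exactly the information the greedy step consumes. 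This already shows the algorithm runs online in any input model, adversarial or random (Model~\ref{mod:random}).

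Next I would argue that the online run produces a \emph{maximal} feasible matching, i.e.\ one to which no further edge can be added. The key point is monotonicity: every quota constraint can only become tighter as edges are added, so if $(a,p)$ is infeasible at the moment $a$ arrives, it remains infeasible at termination. Hence any item the algorithm leaves unmatched is blocked at every neighbouring platform in the final solution as well, and the output is maximal. This is precisely the property that the analysis of Theorem~\ref{thm:approx-delta-lam} relies on.

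For the ratio I would reuse the structural fact underlying Theorem~\ref{thm:approx-delta-lam}: the feasible matchings in the many-to-one setting (Model~\ref{mod:mo}) are exactly the common independent sets of $\Delta+1$ matroids over the edge set --- the partition matroid enforcing that each item is used at most once, together with the (at most) $\Delta$ laminar matroids per platform, one of which encodes the overall quota $q_p$. A maximal common independent set in an intersection of $k$ matroids has cardinality at least $\frac{1}{k}$ times that of any feasible set, in particular the optimum; with $k=\Delta+1$ this yields the $\frac{1}{\Delta+1}$ bound. Crucially, this inequality is a statement about maximal solutions and makes no reference to the order in which elements were inserted, so it applies verbatim to the maximal matching produced by the online run regardless of the (possibly adversarial) arrival order, giving a competitive ratio of $\frac{1}{\Delta+1}$ in any input model.

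The many-to-many case (Model~\ref{mod:mm}) is handled by the same template: there is no longer an item-side partition matroid, but by hypothesis the classes of $a$ and of $p$ that contain a given edge $(a,p)$ partition into $\Delta+1$ laminar families, so the feasible solutions are again the common independent sets of $\Delta+1$ matroids. Online, when $a$ arrives I reveal its classes and greedily add each incident edge that keeps all item-side and platform-side quotas satisfied; monotonicity again gives maximality and the same $\frac{1}{\Delta+1}$ bound. The main thing to get right is the order-independence of the counting argument --- confirming that the maximal-solution bound for matroid intersection never invokes the construction order --- and noting that the unweighted assumption is essential: for weighted instances greedy can be forced to commit to light edges before heavy ones arrive, so the online guarantee does not extend to the weighted many-to-many algorithm, which remains offline.
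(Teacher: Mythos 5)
Your overall strategy coincides with the paper's: the offline greedy of Theorem~\ref{thm:approx-delta-lam} already processes items one at a time and never revokes a decision, so it is literally an online algorithm; its output is maximal because quota constraints only tighten as edges are added; and the approximation bound is a statement about maximal feasible sets that never refers to insertion order, so it transfers verbatim to any arrival model. Your closing observation that the guarantee does not survive in the weighted online case also matches the paper's remark.

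Where you diverge is in how you certify the $\frac{1}{\Delta+1}$ bound for maximal sets, and this is where there is a gap. You model the feasible sets as the common independent sets of $\Delta+1$ matroids (an item-side partition matroid plus $\Delta$ laminar matroids collecting the platform classes) and invoke the classical fact that a maximal common independent set in an intersection of $k$ matroids is a $\frac{1}{k}$-approximation. That decomposition requires the \emph{entire} family of classes of each platform to be partitioned into $\Delta$ laminar families. The hypothesis of the theorem --- and of Lemma~\ref{lem:hyper-indep}, which is what the paper actually uses --- is strictly weaker and local: only the classes containing any \emph{fixed} item need to split into $\Delta$ laminar families. These are not equivalent: three pairwise-crossing classes $\{1,2\}$, $\{2,3\}$, $\{1,3\}$ need three laminar families globally, yet each element lies in only two of them, so $\Delta=2$ locally while no global decomposition into two laminar families exists. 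Under the local hypothesis your $(\Delta+1)$-fold matroid intersection need not exist, and the classical bound applied to whatever global decomposition does exist gives a worse ratio. The paper's Lemma~\ref{lem:hyper-indep} avoids this by a direct exchange argument (in effect, $\Delta$-extendibility): when a vertex $v$ is added to a feasible set, at most one element is removed per laminar family \emph{of the classes containing $v$}, which only requires the local condition. So either strengthen your hypothesis to a global laminar decomposition (which proves a weaker theorem than stated) or replace the matroid-intersection step with the local exchange argument of Lemma~\ref{lem:hyper-indep}; the online portion of your argument is otherwise exactly the paper's.
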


Having achieved a competitive ratio that is close to the lower bound (from Proposition~\ref{prop:hardness-indset1}~(\ref{itm:bounded-degree})), 
we consider the case where classes are restricted to be laminar. We consider the random order input model (Model~\ref{mod:random}) and show that a simple greedy algorithm from the literature also works for \CMM~ and that it achieves the same competitive ratio. We use the technique of randomized dual fitting which has been used to analyse competitive ratios in works such as \cite{devanur_ranking,huang_deadlines}. 
\begin{theorem}\label{thm:approx-online-random}
There is a polynomial-time algorithm achieving a competitive ratio of $1-\frac{1}{e}$ for \CMM\ with laminar classes in the random input model.
\end{theorem}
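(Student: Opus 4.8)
The plan is to set up the natural LP relaxation of \CMM\ and analyze the greedy algorithm by constructing a feasible dual solution in expectation (randomized dual fitting). Introduce primal variables $x_{a,p}\in[0,1]$ for each edge and maximize $\sum_{(a,p)}x_{a,p}$ subject to the item constraints $\sum_{p}x_{a,p}\le 1$, the platform-quota constraints $\sum_{a}x_{a,p}\le q_p$, and the class-quota constraints $\sum_{a\in C}x_{a,p}\le q_p^C$ for every $p$ and $C\in\cC_p$. The dual has a variable $\alpha_a$ per item, $\beta_p$ per platform, and $\gamma_p^C$ per class, and the constraint associated with edge $(a,p)$ reads $\alpha_a+\beta_p+\sum_{C\in\cC_p:\,a\in C}\gamma_p^C\ge 1$. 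The key structural observation is that, since $\cC_p$ is laminar, the classes of $p$ that contain a fixed item $a$ form a chain $C_1\subsetneq C_2\subsetneq\cdots$, so this dual constraint couples a nested family of class variables with the single platform variable.

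Next, I would model the random arrival order by drawing i.i.d.\ arrival times $Y_a\sim U[0,1]$ and processing items in increasing order of $Y_a$; this is distributionally identical to a uniformly random permutation. The algorithm is plain greedy: on arrival, match $a$ to any platform $p\in N(a)$ that is still feasible, i.e.\ whose overall quota and all relevant class quotas are unsaturated. Fixing the monotone potential $g(y)=e^{y-1}$, whenever $a$ is matched to $p$ I split one unit of dual value, assigning the item share $\alpha_a=1-g(Y_a)$ and distributing the resource share $g(Y_a)$ among the quotas that $a$ consumes, incrementing $\beta_p$ and the $\gamma_p^C$ along $a$'s chain, each normalized by the respective quota. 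The bookkeeping (charging the resource share to the tightest binding quota, or splitting it along the chain) is arranged so that, deterministically, $\sum_a\alpha_a+\sum_p q_p\beta_p+\sum_{p,C}q_p^C\gamma_p^C$ equals the size of the matching produced.

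With the accounting in place, I would scale every dual variable by $\tfrac{1}{1-1/e}$ and prove approximate dual feasibility in expectation: for every edge $(a,p)$, $\E\!\left[\alpha_a+\beta_p+\sum_{C\ni a}\gamma_p^C\right]\ge 1-\tfrac1e$. The argument fixes the arrival times of all items except $a$ and varies $Y_a$. If $a$ is matched, the term $\alpha_a=1-g(Y_a)$ already contributes; if $a$ cannot be placed at $p$, then at time $Y_a$ some class on $a$'s chain (or $p$ itself) is saturated, and by laminarity there is a well-defined minimal saturated class, all of whose matched items arrived before $Y_a$ and therefore contributed $g$-values that lower-bound the corresponding resource dual. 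A monotonicity argument in $Y_a$, together with the standard integral $\int_0^1(1-g(t))\,dt = 1/e$ and $g(1)=1$, then yields the $1-1/e$ bound edge by edge. Weak LP duality applied in expectation gives $\mathrm{OPT}\le \tfrac{1}{1-1/e}\,\E[\text{ALG}]$, i.e.\ a competitive ratio of $1-1/e$.

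The main obstacle is the feasibility step in the presence of the nested class constraints. In the classical matching and $b$-matching analyses there is a single resource per offline vertex, so the blocking constraint for an unmatched edge is unambiguous; here each match simultaneously consumes a whole chain of quotas, and the saturated class that blocks $(a,p)$ can change as $Y_a$ varies. The work is to use laminarity to make the minimal saturated class well-defined and monotone in $Y_a$, to define the charge distribution along the chain so that the per-resource duals telescope correctly, and to verify that the resulting lower bound still integrates to exactly $1-1/e$ rather than degrading with the depth of the laminar family.
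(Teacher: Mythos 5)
Your overall strategy --- greedy with random arrival times $Y_a\sim U[0,1]$, randomized dual fitting against the natural LP, and a fix-$\vec{y}_{-a}$/vary-$Y_a$ analysis with $g(y)=e^{y-1}$ --- is exactly the paper's approach. However, your dual split is backwards, and this breaks the feasibility step. The random variable being integrated is the \emph{item's} arrival time, so the item must receive the share $g(Y_a)$ when it is matched (giving $\int_0^1 g(t)\,dt = 1-\tfrac1e$), and a saturated class must receive the \emph{complement}, roughly $1-g(\theta)$ where $\theta$ bounds the arrival times of the items that filled it; then the two cases telescope as $\int_0^\theta g(t)\,dt + 1 - g(\theta) = 1-\tfrac1e$. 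You instead give the item $\alpha_a = 1-g(Y_a)$ and give the resources the $g$-values of the early arrivers. Consider an edge $(a,p)$ where $a$ is always matched (possibly elsewhere) but $p$ and all of $p$'s classes containing $a$ remain unsaturated for every value of $Y_a$: then $\beta_p=\gamma_p^C=0$ and the constraint accrues only $\E[\alpha_a]=\int_0^1(1-g(t))\,dt=\tfrac1e < 1-\tfrac1e$, so scaled dual feasibility fails. Your charging of the saturating items' $g$-values to the class dual points the wrong way for the same reason: a class saturated early by items with small $y$ receives a \emph{small} dual value under your scheme, precisely when it needs a large one to compensate for the late-arriving blocked item. (This is the paper's rule: set $\alpha_i=1$ on matching, downgrade to $g(y_i)$ only when a containing platform class becomes tight, and set $\beta_j^{(k)} = 1 - \tfrac{1}{q}\sum_i g(y_i) \ge 1-\max_i g(y_i)$.)

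Two further points. First, your greedy ``match $a$ to any feasible platform'' is not a well-defined function of the arrival order; the vary-$Y_a$ argument needs a deterministic tie-breaking rule (the paper fixes an arbitrary ranking of the platforms and always takes the highest-ranked feasible ones), and the monotonicity of the critical saturation time as $Y_a$ varies is proved relative to that rule. Second, the paper actually proves the theorem for the many-to-many model, where items also carry laminar classes over platforms; there the heart of the argument is a structural lemma (Lemma~\ref{lemma-item}) showing that lowering $y_i$ can never cause another item to lose a platform \emph{because of an item class}, which is what keeps the case analysis (and the identification of the blocking class) well-behaved. You correctly flag the ``which class blocks, and is it monotone in $Y_a$'' issue as the main obstacle, but you neither resolve it nor restrict to a setting where it disappears, so even after fixing the dual split the proposal is a plan rather than a proof.
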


\begin{onlyshortversion}
	\subsection{Implications for other problems}\label{sec:conseq}
\end{onlyshortversion}
\begin{onlyfullversion}
	\section{Implications for other problems}\label{sec:conseq}
\end{onlyfullversion}
Although the \CMM\ problem is modelled as a matching of items to platforms, we show that the
classes capture  problems which are well studied and are of independent interest. 

\paragraph{Maximum Independent set on hypergraphs}
Given a hypergraph $H = (V,E)$, and a function $f: E \to \mathbb{Z}^+$, compute the largest set of vertices $S$ such that 
for every  $e \in E$, $|S\cap e|\leq  f(e)$.
We note that when $f(e) = 1 \ $ for each edge $e$, this is the problem of computing the {\em strong}
maximum independent set and when $f(e) = |e|-1$, this is the {\em weak} maximum independent set problem.
These problems are well-studied for bounded-degree hypergraphs; \cite{halldorson_hypergraph_ind} describe algorithms achieving factors of {$\frac{1}{\Delta}$ and
$\frac{5}{\Delta +3}$} for the strong and weak cases respectively, where $\Delta$
denotes the maximum degree of a vertex in $H$. For the weak independent set, this was further improved to $O\left ( \frac{\log \Delta}{\Delta \log \log \Delta}\right )$ in \cite{halldorsson_sdp}.
However, to the best of our knowledge, there is no known approximation algorithm for the case when $f(e)$ is an arbitrary value -- we call this 
the \gmis~ on hypergraphs. We state our approximation result for independent sets on hypergraphs below, which is a consequence of Theorem~\ref{thm:approx-delta-lam}.
\begin{proposition}
\label{prop:gmis}
There is a polynomial time $\frac{1}{\Delta}$ approximation algorithm for the problem of computing a \gmis~on hypergraphs with maximum degree $\Delta$.
\end{proposition}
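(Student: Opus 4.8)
The plan is to reduce \gmis\ to the many-to-one \CMM\ problem (Model~\ref{mod:mo}) and then apply Theorem~\ref{thm:approx-delta-lam}. Given a hypergraph $H=(V,E)$ of maximum degree $\Delta$ together with $f:E\to\mathbb{Z}^+$, I would construct the following instance: the item set is $A=V$, there is a single platform $p$ with $N(p)=V$, its total quota is $q_p=|V|$, and for every hyperedge $e\in E$ we add a class $C_e=e$ to $\cC_p$ with quota $q_p^{C_e}=f(e)$.

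First I would verify the correspondence between solutions. Since $p$ is the only platform, any feasible assignment simply selects a subset $S\subseteq V$ to match to $p$, and the class quotas are satisfied exactly when $|S\cap e|=|S\cap C_e|\le f(e)$ for every $e\in E$; the total quota $q_p=|V|$ is satisfied automatically. Thus feasible assignments are in bijection with feasible generalized independent sets, the number of matched items equals $|S|$, and an $\alpha$-approximate assignment is an $\alpha$-approximate \gmis. Next I would bound the laminar structure: each vertex $v$ belongs precisely to the classes $\{C_e : v\in e\}$, of which there are $\deg_H(v)\le\Delta$, and a single set is a (trivial) laminar family, so every item lies in at most $\Delta$ laminar families of $\cC_p$.

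The delicate point, and the step I expect to be the main obstacle, is sharpening the factor from the $\frac{1}{\Delta+1}$ that a black-box application of Theorem~\ref{thm:approx-delta-lam} would give to the claimed $\frac{1}{\Delta}$. The ``$+1$'' in the theorem is accounted for by the platform's total quota, which in the general many-to-one setting is a genuine binding constraint (equivalently, in the many-to-many formulation it contributes one additional laminar family partitioning the classes around each edge). In the instance above, however, $q_p=|V|$ can never be violated, so it is non-binding and may be dropped from the instance entirely; the classes incident to each edge $(v,p)$ then partition into only $\Delta$ laminar families. Invoking the many-to-many form of Theorem~\ref{thm:approx-delta-lam} with this partition into $\Delta$ families yields exactly the $\frac{1}{\Delta}$-approximation. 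I would take care to argue rigorously that a vacuous total quota truly eliminates the extra family rather than merely being assumed away, since this one-unit saving is the entire improvement over the generic bound.
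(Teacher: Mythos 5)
Your reduction from \gmis\ to single-platform \CMM\ is sound, and it is essentially the same correspondence the paper uses, just run in the opposite direction (the paper maps the single-platform \CMM\ instance to a \gmis\ instance $H=(V,E_H)$ with one hyperedge $e_C$ per class and $f(e_C)=q(C)$). Your observations that the total quota $q_p=|V|$ is vacuous and that each item then lies in at most $\Delta$ laminar families are also correct. The problem is the next step: invoking Theorem~\ref{thm:approx-delta-lam} as a black box is circular relative to the paper's logic. In the paper, Theorem~\ref{thm:approx-delta-lam} is \emph{derived from} Proposition~\ref{prop:gmis}: the single-platform case is proved first (under the reduction above it \emph{is} the \gmis\ statement), and the multi-platform and many-to-many cases are then obtained by collapsing everything back to a single platform over the edge set --- which is exactly where the extra ``$+1$'' family (the quota-one class of each item) enters. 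So your argument assumes, in substance, the statement it is meant to prove, and no independent proof of Theorem~\ref{thm:approx-delta-lam} is offered in its place.

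What is missing is the actual algorithm and its analysis. The paper's algorithm is greedy: output any maximal feasible set $S$ (keep adding vertices as long as no hyperedge quota is violated). The $\frac{1}{\Delta}$ guarantee rests on Lemma~\ref{lem:hyper-indep}, an exchange argument by induction on $|B|$: for a maximal set $S\cup B$ and any feasible $C$ with $S\subseteq C$ and $C\cap B=\emptyset$, one shows $|C|\le |S|+\Delta|B|$ by adding a vertex $v\in B$ to $C$ and, within each of the at most $\Delta$ laminar families of hyperedges through $v$, deleting a single vertex of $C\setminus B$ from the smallest violated edge, which by laminarity restores feasibility for the whole family; at most $\Delta$ deletions are thus needed per added vertex. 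Applying the lemma with $S=ALG\cap OPT$, $B=ALG\setminus OPT$, $C=OPT$ gives $|OPT|\le\Delta|ALG|$. Nothing in your proposal substitutes for this. You correctly flag that sharpening $\frac{1}{\Delta+1}$ to $\frac{1}{\Delta}$ is the delicate point, but the resolution is not to argue the total quota away: the $\frac{1}{\Delta}$ single-platform bound is the primitive from which Theorem~\ref{thm:approx-delta-lam} is built, not a refinement obtainable from it.
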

For the case when average degree of the vertices is $\Delta$, we get the following:
\begin{theorem}\label{thm:approx-avg-degree}
	There is a $\frac{r}{4\Delta}$ approximation algorithm for the generalized independent set where $r = \frac{OPT}{n}$ and
	$\Delta$ denotes the average degree of a vertex.
\end{theorem}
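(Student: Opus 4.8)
The plan is to reduce the average-degree case to the maximum-degree case already handled by Proposition~\ref{prop:gmis}, by discarding the few vertices of very large degree. Since the average degree is $\Delta$, the sum of degrees is $n\Delta$, so by a Markov-type argument only a small fraction of vertices can have degree much larger than $\Delta$. After removing those, the surviving subhypergraph has bounded maximum degree, and I can invoke the $\frac{1}{\Delta}$-approximation of Proposition~\ref{prop:gmis} on it. The only thing to check is that the optimal solution does not lose too much when restricted to the low-degree vertices, and this is exactly where the factor $r = OPT/n$ enters.

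Concretely, I would fix a threshold $t := 2\Delta/r = 2\Delta n / OPT$ and let $V' := \{v \in V : d(v) \le t\}$, where $d(v)$ denotes the number of hyperedges containing $v$. Writing $B := V \setminus V'$ for the high-degree vertices, the identity $\sum_v d(v) = n\Delta$ together with Markov's inequality gives $|B| \le n\Delta/t = OPT/2$. Now take an optimal solution $S^*$ with $|S^*| = OPT$. Since at most $OPT/2$ of its vertices lie in $B$, we get $|S^* \cap V'| \ge OPT/2$. Crucially, $S^* \cap V'$ remains feasible: for every edge $e$ we have $|(S^* \cap V') \cap e| \le |S^* \cap e| \le f(e)$, so the optimum $OPT'$ of the generalized independent set problem on the induced subhypergraph $H[V']$ (keeping each $e$ as $e\cap V'$ with the same value $f(e)$) satisfies $OPT' \ge OPT/2$.

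By construction every vertex of $V'$ has degree at most $t$ in $H[V']$, so Proposition~\ref{prop:gmis} returns a feasible set $S$ with
\[
  |S| \;\ge\; \frac{1}{t}\,OPT' \;\ge\; \frac{r}{2\Delta}\cdot\frac{OPT}{2} \;=\; \frac{r}{4\Delta}\,OPT .
\]
Since the constraints on $V'$ are inherited unchanged from $H$, the set $S$ is also feasible for the original instance, which yields the claimed $\frac{r}{4\Delta}$-approximation.

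The main obstacle is that the threshold $t$ depends on the unknown quantity $OPT$ (equivalently $r$). I would handle this in the standard way by guessing: since $OPT$ is an integer in $\{1, \dots, n\}$, I run the procedure above for each candidate value $g \in \{1, \dots, n\}$ with threshold $t_g := 2\Delta n/g$, and output the largest feasible set found over all guesses. For the correct guess $g = OPT$ the analysis applies verbatim, so the best solution returned is at least $\frac{r}{4\Delta}\,OPT$, at the cost of only a polynomial-factor increase in running time. A minor point to verify carefully is that degrees and feasibility are genuinely preserved under vertex deletion: each hyperedge $e$ is retained as $e \cap V'$ with $f(e)$ unchanged, so a vertex of $V'$ has the same degree in $H[V']$ as in $H$ and no constraint is strengthened.
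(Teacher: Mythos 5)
Your proposal is correct and follows essentially the same route as the paper's proof: the same degree threshold $2\Delta/r$, the same Markov-type bound showing at most $OPT/2$ high-degree vertices, the same application of Proposition~\ref{prop:gmis} to the induced low-degree subhypergraph, and the same guessing of $OPT$ over $\{1,\dots,n\}$. Your write-up is somewhat more explicit than the paper's (e.g., in verifying that feasibility and degrees are preserved under the restriction to $V'$), but the argument is identical in substance.
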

 
For the CMM problem, this implies an $\frac{OPT}{4\Delta n}$ approximation algorithm when we only have an upper bound on the average number of laminar families of classes an item belongs to, and there is only one platform.
\paragraph{Ranking and group fairness} In an apparently different model, Celis~et~al.~\cite{vishnoi_fairness} consider
ranking $n$ items from a universe of $m$ items, where $n \ll m$. Items are assigned properties, 
and upper quotas for the number of items from any property in the top $k$ ranks. 
When items have at most $\Delta$ properties each, they give a $\frac{1}{\Delta+2}$ 
approximation while allowing constraints to be violated by a factor of $2$. This problem can be reduced to the \CMM\ problem
and our algorithm from Theorem~\ref{thm:approx-delta-lam} achieves the same approximation factor without violating any class constraints.
\begin{onlyshortversion}
	We leave the reduction to the full version of this paper~\cite{full}.
\end{onlyshortversion}
\begin{onlyfullversion}
	
We now show a reduction from their problem to the \CMM\ problem.
The fair rankings problem is defined as follows: Given a set of $m$ items, our objective is to choose $n$ among them and rank them. Every item $i$ has a value $W_{ij}$ when ranked in the $j^{th}$ position. Every item also has some properties, and every property can be represented as a subset of the $m$ items that share that property. The objective is to rank them such that this the total value is maximized, subject to some fairness constraints on the properties. A fairness constraint corresponding to the property $p$ is set of values $U_1^p,U_2^p,\ldots, U_n^p$ such that $U_k^p$ is the maximum number of items with property $p$ that can be ranked in the top $k$ positions.

In the reduction, we create one platform $p$. For every item $i$ in the ranking instance we have $n$ items $(i,1),(i,2),\ldots, (i,n)$ in the \CMM\ instance. The item $(i,j)$ being matched to $p$ will be equivalent to ranking the $i^{th}$ item in the $j^{th}$ position. Then our classes are
\begin{itemize}
	\item One class for every item to ensure that an item is ranked at most once. That is, $\forall\ i\in \{1,2,\ldots, m\}$, we have the class
	\[
	C= \{(i,1),(i,2),(i,3)\ldots (i,n)\} \quad q(C) = 1
	\]
	\item One class for every rank to ensure that at most one item is ranked in one position. That is, $\forall\ j\in \{1,2,\ldots, n\}$, we have the class
	\[
	C= \{(1,j),(2,j),(3,j)\ldots (m,j)\} \quad q(C) = 1
	\]
	\item One laminar family of classes for every property $p$ with constraints $U_1^p,U_2^p,\ldots, U_n^p$. There are $n$ constraints, one for each position in the ranking. Let $p=\{i_1,i_2,\ldots, i_k\}$  be the items with property $p$. Then $\forall\ p,\forall j\in\{1,2,\ldots, n\}$ we have
	\begin{align*}
		C_j^p= \{(i_1,j),(i_2,j),(i_3,j)\ldots(i_k,j)\} \cup C_{j-1}^p \\
		q(C_j^p) = U_j^p
	\end{align*}
	where $C_0^p =\{\}$. It is easily seen that this is a laminar family.
\end{itemize}
Thus, for an item $i$ in the ranking instance that belongs to $\Delta$ properties, for any $j\in \{1,2,\ldots, n\}$, $(i,j)$ in the constructed \CMM\ instance belongs to $\Delta+2$ laminar families of classes. Using the algorithm from Theorem $\ref{thm:approx-delta-lam}$, we get a $\frac{1}{\Delta+2}$ approximation without any violation in quotas. However, it has to be noted that in \cite{vishnoi_fairness}, they insist on finding a ranking with $n$ elements whereas we may output a possibly smaller ranking. We also note that this reduction immediately gives a better hardness of approximation of $O\left(\frac{\log \Delta}{\Delta}\right)$ for the setting where every item lies in $\Delta$ laminar families of classes per platform.

\end{onlyfullversion}

\paragraph{Simultaneous matchings} Kutz~et~al.~\cite{mahajan_simultaneous_matchings} study the problem called simultaneous matchings
which is defined as follows:
given a bipartite graph $G=(X\cup D, E)$ and a collection $\mathcal{F}\subseteq 2^X$, 
find the largest solution $M \subseteq E$ such that $\forall\ C\in \mathcal{F}, M\cap (C \times D )$ is a matching. 
This problem can be reduced to the \CMM\ problem where every vertex $d$ in $D$ has constraints $\mathcal{F}$ (excluding vertices to which $d$ has no edge), and each class has quota $1$. The approximation factor in \cite{mahajan_simultaneous_matchings} is better but the constraints are significantly more restricted than ours.	
	\section{Offline Approximation algorithms}
\label{sec:approx}

\begin{onlyshortversion}
	In light of the strong inapproximability result for the general \CMM\ problem (Proposition~\ref{prop:hardness-indset1}), we describe approximation algorithms for some special cases.  
\end{onlyshortversion}

\begin{onlyfullversion}
We start by showing the hardness of the \CMM\ problem. 
\begin{proof}[Proof of Proposition~\ref{prop:hardness-indset1}]
	Consider an arbitrary instance of the maximum independent set (\MIS) problem, which is an undirected graph $G=(V,E)$. From $G$, we will create an instance of the \CMM\ problem denoted as a graph $H=(A\cup\{p\},E')$, and a set of platform classes $\mathcal{C}$, where $A$ is the set of items and $p$ is a single platform.
		\begin{eqnarray}
			A &=&\{a_i\mid v_i\in V\}\nonumber \\
			\mathcal{C} &=& \{C_{ij}\mid (v_i,v_j)\in E\}\nonumber \\
			E' &=&\{(a_i,p)\mid a_i\in A\}\nonumber
		\end{eqnarray}
		Thus, there is an item $a_i\in A$ for each $v_i\in V$, and every item in $A$ has an edge to $p$. There is a class $C_{ij}$ with quota $1$ for every edge
		$(v_i,v_j) \in E$. We claim that the instance $H$ so constructed has a feasible matching of size $k$ if and only if $G$ has an independent set of size $k$.
		
		Since an independent set consists of at most one end-point of an edge $(v_i,v_j)$, the corresponding set of items respects quota of each class $C_{ij}$.
		Thus, given an independent set of size $k$, there is a feasible matching of size $k$ in the \CMM\ instance. 
		Similarly, given a feasible matching of size $k$ in the \CMM\ instance, the set of vertices corresponding to the matched items form an independent set in $G$. There cannot be two matched items $a_i,a_j$ such that $(v_i,v_j)\in E$ because of the quota of the class $C_{ij}$. Part (\ref{itm:general}) of Proposition~\ref{prop:hardness-indset1} follows from the hardness of approximation of \MIS~\cite{zuck_mis_hardness}.

		The \MIS\ problem is known to be \NP-hard even when the input instance consists of a graph where the maximum degree of any vertex is at most $\Delta$, where $\Delta\geq 3$ \cite{garey_johnson_npcomplete}. It is also known to be {\sf NP}-Hard to approximate below a factor of $O(\frac{\Delta}{\log^2\Delta})$ when $\Delta=O(1)$~\cite{bhangale_mis_hardness}. Part (\ref{itm:bounded-degree}) of Proposition~\ref{prop:hardness-indset1} follows from this, by the same reduction, since the degree of a vertex $v_i$ in the \MIS\ instance is the number of classes containing the corresponding item $a_i$.
\end{proof} 
\end{onlyfullversion}

\subsection{Proof of Theorem~\ref{thm:approx-delta-lam}}
In this section, we consider the case when, for each platform $p$ and item $a$, the classes containing $a$ can be partitioned into at most $\Delta$ laminar families. 
We first present a $\frac{1}{\Delta}$-approximation algorithm for the case when there is only one platform. This algorithm also generalizes the maximum independent set in hypergraphs (Proposition~\ref{prop:gmis}). We extend this algorithm to a $\frac{1}{\Delta+1}$-approximation algorithm
for the case with multiple platforms and even 
to the many-to-many setting.

\paragraph{Single platform case}

Let $G = (A \cup \{p\}, E)$ be an instance of the \CMM\ problem with a single platform $p$ and a family of classes $\mathcal{C}$ with the above restriction. 

{\em Reduction to the \gmis~problem:} We construct from $G$ an instance
of the \gmis~problem $H = (V, E_H)$ by setting $V=\{v_i\mid a_i\in A\}$ and $E_H=\{e_C\mid C\in \mathcal{C}\}$, and $f(e_C)=q(C)$.
We call a set $S\subseteq V$ {\em feasible} if for every $e\in E$ , $|S\cap e| \leq f(e)$.
We call a set $S\subseteq V$ {\em maximal} if $S$ is feasible and $S\cup \{v\}$ is not feasible for every $v\in V\setminus S$. 
Our algorithm is a simple greedy algorithm: output a maximal set of vertices. 
To prove the approximation,
we use the following lemma. 
\begin{onlyshortversion}
	See the full version of this paper for the proof.
\end{onlyshortversion}
\begin{lemma}
\label{lem:hyper-indep}
Consider a set $S\subseteq V$ and a set $B\subseteq V\setminus S$ such that $S\cup B$ is a maximal set of vertices. Then for every feasible set $C$ such that $S \subseteq C$ and $C\cap B = \emptyset$, we have $|C| \leq |S| + \Delta |B|$. 
\end{lemma}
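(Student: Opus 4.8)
The plan is to set $D := C \setminus S$ and prove the equivalent bound $|D| \le \Delta|B|$; the claim then follows since $S \subseteq C$ gives $|C| = |S| + |D|$. Call an edge $e \in E$ \emph{saturated} if $|(S\cup B)\cap e| = f(e)$. I would first record two preliminary facts. (i) \emph{Every saturated edge $e$ satisfies $|D\cap e|\le|B\cap e|$.} Since $C$ is feasible, $|C\cap e|\le f(e)$; since $S\subseteq C$ and $C\cap B=\emptyset$, the sets $S\cap e$ and $B\cap e$ are disjoint and partition $(S\cup B)\cap e$, so $f(e)=|S\cap e|+|B\cap e|$, and as $S\cap e\subseteq C\cap e$ we get $|D\cap e|=|C\cap e|-|S\cap e|\le|B\cap e|$. (ii) \emph{Every $u\in D$ lies in some saturated edge.} Since $u\notin S\cup B$, maximality of $S\cup B$ forces some edge $e\ni u$ with $|(S\cup B)\cap e|\ge f(e)$, and feasibility of $S\cup B$ makes this an equality, so $e$ is saturated and contains $u$.

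Next I would exploit laminarity through a chain observation. Fix $b\in B$ and one of the at most $\Delta$ laminar families into which the edges containing $b$ are partitioned; all edges of that family containing $b$ are pairwise nested, since laminar sets sharing the point $b$ cannot be disjoint, and hence they form a chain. Using Fact (ii), for each $u\in D$ I would select a saturated edge $e_u\ni u$ that is inclusion-maximal among \emph{all} saturated edges (such an edge exists by finiteness), and set $\mathcal{E}:=\{e_u : u\in D\}$. The reason for insisting on globally maximal edges is exactly the crux of the argument: if two distinct edges of $\mathcal{E}$ both contain a given $b$ and lie in the same laminar family of $b$, then they are nested, so the smaller one is strictly contained in a saturated edge, contradicting its maximality. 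Therefore each $b\in B$ lies in at most one edge of $\mathcal{E}$ per family, hence in at most $\Delta$ edges of $\mathcal{E}$.

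Finally I would combine these by double counting. Because $\mathcal{E}$ covers $D$ we have $|D|\le\sum_{e\in\mathcal{E}}|D\cap e|$; Fact (i) bounds each summand by $|B\cap e|$; and swapping the order of summation gives $\sum_{e\in\mathcal{E}}|B\cap e|=\sum_{b\in B}|\{e\in\mathcal{E}:b\in e\}|\le\Delta|B|$ by the previous paragraph. This yields $|D|\le\Delta|B|$, as required. (Equivalently, Fact (i) and the degree bound together are precisely the capacitated Hall condition for an assignment $D\to B$ using each vertex of $B$ at most $\Delta$ times, so one could instead invoke Hall's theorem.) I expect the genuine obstacle to be controlling how many times a single vertex of $B$ is reused across different witness edges: Fact (i) alone only produces \emph{local} injections inside each edge, and it is the chain structure of each laminar family combined with the global-maximality choice of $\mathcal{E}$ that caps this reuse at $\Delta$.
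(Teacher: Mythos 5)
Your proof is correct, and it takes a genuinely different route from the paper's. The paper argues by induction on $|B|$: supposing a feasible $C$ with $|C| > |S| + \Delta|B|$, it picks $v \in B$, forms $C \cup \{v\}$, and repairs feasibility by deleting at most one vertex per laminar family of edges through $v$ (in each family the edges through $v$ form a chain, so fixing the smallest violated edge fixes the whole family), obtaining a feasible $C'$ with $|C'| \geq |C| + 1 - \Delta$ that contradicts the inductive hypothesis applied to $S \cup \{v\}$ and $B \setminus \{v\}$. You replace this exchange-and-recurse scheme with a single non-inductive double count: cover $D = C \setminus S$ by inclusion-maximal saturated edges, bound $|D \cap e| \leq |B \cap e|$ on each via your Fact (i), and use laminarity plus global maximality to cap at $\Delta$ the number of chosen edges through any fixed $b \in B$. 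Both proofs exploit laminarity at the same point --- edges through a fixed vertex within one family form a chain, so only one of them can matter --- but your version isolates the two facts driving the bound, makes visible that the inequality is a capacitated Hall-type condition from $D$ to $B$, and avoids the induction entirely; the paper's version more directly mirrors the greedy algorithm's local repair step. Your identification of the crux is also accurate: Fact (i) alone gives only edge-local control, and it is precisely the choice of inclusion-maximal saturated witnesses, combined with the chain structure, that prevents a vertex of $B$ from being charged more than $\Delta$ times.
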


\begin{onlyfullversion}
\begin{proof}
	We prove this by induction on $|B|$. For any pair of sets $S,B$ that satisfy the conditions of the lemma, the base case of $|B|=0$ is trivially true by maximality of $S \cup B$. 
	To prove the induction step, assume that the lemma is true for all $S,B$ where $S \cup B$ is maximal
	and  $|B| \leq k$. We will show that the lemma holds for all $S,B$ that satisfies the lemma conditions and where $|B|=k+1$. 
	Suppose, for contradiction, that we have a feasible set $C$ such that $S \subseteq C, C \cap B=\emptyset$ and
	$|C| > |S| + \Delta |B|$.
	
	Now consider the set $C \cup \{v\}$, for some $v \in B$. If $C \cup \{v\}$ is feasible,
	then we let $C' = C \cup \{v\}$, else  we construct a feasible set $C'$ as below.
	
	Consider the set of edges $E_v$ that contain $v$. We partition the set $E_v$ 
	into  $\Delta$ laminar families and call these sets $E_1,E_2,\ldots, E_\Delta$.  Since each $E_i$ is laminar and
	every edge in $E_i$ contains the vertex $v$, 
	we can arrange the edges of $E_i$ as $e_i^{(1)},e_i^{(2)},\ldots$ such that $e_i^{(j)} \subset e_i^{(j+1)}$ for all $j$. 
	Since $C$ is feasible and $C \cup \{v\}$ is not feasible, for every edge $e$ we have $|(C \cup \{v\}) \cap e| \le f(e) +1$, that is, the violation is by at most one. 
	For each $i$, we find the smallest $j$ (if any) such that $e_i^{(j)}$ has a violation of 1. We remove
	a vertex $u_i \in e_i^{(j)}$ such that $u_i  \in C \setminus B$. Note that the set $C \cup \{v\} \setminus u_i$ is feasible for 
	all the edges in $E_i$ since $e_i^{(j)} \subset e_i^{(j')}$ where $j' > j$.  Further, note that $u_i$ exists because we assumed that $S \cup B$
	and hence $S \cup \{v\}$ is feasible.  We repeat this process for each laminar family and hence may have removed at most $\Delta$
	vertices from $C \cup \{v\}$ obtaining a $C'$ which is feasible. Thus, 
	\begin{align*}
		|C'| \geq |C| + 1 - \Delta > |S| + 1 +  \Delta (|B|-1)
		.\end{align*}
	The second inequality follows from the assumption that $|C| > |S| + \Delta |B|$.
	Now consider the induction hypothesis for $S'= S \cup \{v\}$, $B'=B\setminus \{v\}$. Since $|B| = k$, and $C'$ is a feasible set containing $S'$ and is disjoint from $B'$ we have
	\begin{align*}
		|C'| \leq |S|+ 1 + \Delta (|B|-1)
	\end{align*}
	from the induction hypothesis. This is a contradiction. Thus, induction hypothesis is true and the lemma follows. 
	
\end{proof}
\end{onlyfullversion}

Let $ALG$ denote any maximal independent set of $H$ and $OPT$ be the optimal independent set. 
In the above lemma, set $S = ALG\cap OPT, B = ALG\setminus OPT, C = OPT$. The lemma implies 
$|OPT| \leq \Delta |ALG|$. 
This proves Proposition~\ref{prop:gmis}. We note that this also gives us a $\frac{1}{\Delta}$-approximation
for the \CMM~ problem in the single platform case when every item belongs to at most $\Delta$ laminar families of the platform. It is also easy to see that this algorithm runs in time $O(|V||E_H|)$. For every vertex, we add it if it does not exceed the quota of any edge it belongs to. 

\paragraph{Multiple platforms}
We can use the previous result to obtain a $\frac{1}{\Delta+1}$ approximation for the multiple platforms case via a simple $O(|E|)$-time reduction to the single platform case: For every edge $(a,p)$, make a new item $e_{a,p}$. Replace all the platforms by a single dummy platform. Since classes 
are subsets of the neighbourhood sets of items or platforms, 
they can also be seen as subsets of edges of the graph. These naturally form classes over the items $e_{a,p}$.
This combined with the result for the single platform case gives an $O(|E|\cdot |\mathcal{C}|)$ algorithm for the multiple platform case where $|\mathcal{C}|$ is the total number of classes, establishing Theorem~\ref{thm:approx-delta-lam}.

\begin{remark}[Weights on items]
	We remark that the same analysis goes through even if items have weights and the goal is to compute a maximum weight
	matching. The algorithm simply keeps matching the highest weight item that can feasibly match to the platform. 
\end{remark}

\subsection{Constant number of classes}
We can also prove Theorem~\ref{thm:approx-delta-lam} via the following general statement combined with Proposition~\ref{prop:gmis}. 
\begin{onlyshortversion}
	We leave the proof to the full version of this paper. 
\end{onlyshortversion}
We will also need this to prove Theorem~\ref{thm:constant-classes}.
\begin{lemma}\label{lemma:single-to-mul-platform}
	Given a polynomial time $\alpha$-approximation algorithm for the many-to-one matching problem with a single platform, we can obtain a polynomial time $\frac{\alpha}{1+\alpha}$-approximation for the matching problem with multiple platforms.
\end{lemma}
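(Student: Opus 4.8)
The plan is to realize the reduction as a simple greedy meta-algorithm that invokes the single-platform routine once per platform. Fix an arbitrary ordering $p_1, p_2, \ldots, p_m$ of the platforms. Maintain a pool of still-unmatched items, initially all of $A$. When we reach platform $p_t$, run the given $\alpha$-approximation for the single-platform problem on $p_t$ restricted to the currently unmatched items (with the classes $\cC_{p_t}$ and the quotas of $p_t$), commit the matching it returns, and delete the newly matched items from the pool. Since we make only $m = |P|$ calls to a polynomial-time subroutine and each bookkeeping step is cheap, the whole procedure runs in polynomial time.

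For the analysis, let $OPT$ be an optimal multi-platform matching and write $O_p$ for the set of items it assigns to $p$; let $ALG$ be the matching our greedy procedure produces and $A_p$ the items it assigns to $p$. Let $M_{<p}$ denote the set of items already matched by the greedy procedure before it processes $p$, so that the items available at that step are $A \setminus M_{<p}$. The key observation is that $O_p \cap (A \setminus M_{<p})$ is a feasible single-platform solution for $p$: it is a subset of the feasible set $O_p$, and feasibility is closed under taking subsets because every class constraint (and the total quota) is an upper bound. Consequently the optimal single-platform value on the available items is at least $\bigl|O_p \cap (A \setminus M_{<p})\bigr|$, so the $\alpha$-approximation guarantee yields
\[
 |A_p| \;\ge\; \alpha\,\bigl|O_p \cap (A \setminus M_{<p})\bigr|.
\]

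Splitting $O_p$ according to whether its items were already used gives $|O_p| = \bigl|O_p \cap (A\setminus M_{<p})\bigr| + |O_p \cap M_{<p}| \le \tfrac1\alpha |A_p| + |O_p \cap M_{<p}|$. Summing over all platforms and using $\sum_p |A_p| = |ALG|$ (each $ALG$-matched item lies in exactly one $A_p$ by the many-to-one property), we obtain $|OPT| \le \tfrac1\alpha |ALG| + \sum_p |O_p \cap M_{<p}|$. The crux of the argument --- and the step I expect to be the main obstacle --- is bounding this second sum by $|ALG|$. I would handle it by a charging argument: an item $a$ contributes to the term $|O_p \cap M_{<p}|$ only for the unique platform $p$ to which $OPT$ assigns it (uniqueness is exactly the many-to-one property of $OPT$), and only if $a$ was already matched by the greedy procedure, i.e. $a \in ALG$. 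Hence each matched item of $ALG$ is charged at most once, giving $\sum_p |O_p \cap M_{<p}| \le |ALG|$.

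Plugging this bound in yields $|OPT| \le \tfrac{1+\alpha}{\alpha}\,|ALG|$, i.e. $|ALG| \ge \tfrac{\alpha}{1+\alpha}\,|OPT|$, as required. As a sanity check, taking $\alpha = \tfrac1\Delta$ (the single-platform guarantee from Proposition~\ref{prop:gmis}) recovers exactly the $\tfrac1{\Delta+1}$ factor of Theorem~\ref{thm:approx-delta-lam}. I would remark that the order in which platforms are processed is immaterial to the bound, which makes the reduction fully black-box, and that the same chain of inequalities goes through verbatim in the weighted case once $|\cdot|$ is replaced by weight and the single-platform routine is assumed to approximate weight.
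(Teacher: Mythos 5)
Your proposal is correct and takes essentially the same route as the paper: the same greedy sweep over platforms, with your decomposition $|O_p| = |O_p\cap(A\setminus M_{<p})| + |O_p\cap M_{<p}|$ corresponding exactly to the paper's sets $V_i^{(0)}$ and $V_i^{(j)}$, and your charging bound $\sum_p |O_p\cap M_{<p}|\le |ALG|$ being precisely the paper's inequality $|ALG(G_j)|\ge\sum_{k>j}|V_k^{(j)}|$ summed over $j$. Your presentation via a single charging step is a slightly cleaner packaging of the same argument, but there is no substantive difference.
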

\begin{onlyfullversion}
\begin{proof}
	Suppose we have a set $V$ of items and a set $P$ of platforms. There is a hypergraph $G_i$ and function $f_i:V_i\xrightarrow{}\mathbb{Z}^+$ for each platform for the associated instance of \gmis. Let $OPT(G_i)$ be the set of items chosen in graph $G_i$ in OPT. Clearly, the set $OPT(G_i)$ is a feasible set in $G_i$. Note that the goal is to maximize the global number of items selected. Thus, the optimum does not necessarily pick the optimal sets in each hypergraph as there may be vertices which lie in multiple hypergraphs.
	
	Let our algorithm apply the $\alpha$-approximation algorithm to $G_1$. Let the selected independent set be $ALG(G_1)\subseteq V_1$. Now apply the $\alpha$-approximation to the graph induced on $G_2$ by the vertex set $V_2\setminus ALG(G_1)$. Then on the graph induced on $G_3$ by the vertex set $V_3\setminus ALG(G_1)\cup ALG(G_2)$ and so on. 	For $i=1,2,3,\ldots |P|,j=1,2,3,\ldots |P|$ define the sets,
	\begin{align*}
		V_i^{(0)} &= \{ v: v\in OPT(G_i), v\notin ALG(G_j)\ \forall\ j<i \} \\
		\forall j<i, V_i^{(j)} &= \{ v: v\in OPT(G_i), v\in ALG(G_j) \} 
	\end{align*}
	Then, we have
	\begin{equation}\label{optbound_alpha_approx}
		|OPT(G_i)|  = \sum_{k=0}^{i-1}|V_i^{(k)}|
	\end{equation}
	\begin{equation}\label{algbound_alpha_approx}
		|ALG(G_j)|  \geq \sum_{k=j+1}^{|P|} |V_k^{(j)}|
	\end{equation}
	because the $|V_i^{(j)}|$s form a partition over $OPT(G_i)$s and $ALG(G_j)$s. The latter may not completely be covered by the partition, and thus we have a lower bound. Note that $V_i^{(0)}$ is a feasible set in $G_i$ because it is a subset of the $OPT(G_i)$ which is a feasible set. Now consider the graph induced on $G_i$ by the vertex set $V_i\setminus \bigcup_{j<i} ALG(G_j)$. Every vertex in $V_i^{(0)}$ is present in this graph by the definition of $V_i^{(0)}$ and since $V_i^{(0)}$ is a feasible set in $G_i$, it is feasible in any subgraph of $G_i$. Thus, using our $\alpha$-approximation algorithm, we have
	\begin{equation}\label{approxbound_alpha_approx}
		|ALG(G_i)| \geq \alpha |V_i^{(0)}|
	\end{equation}
	Adding equation \ref{approxbound_alpha_approx} over all $i$ and adding equation \ref{algbound_alpha_approx} multiplied by $\alpha$ over all $j$ we have
	\begin{align*}
		(1+\alpha) \sum_{j=1}^{|P|} |ALG(G_j)| &\geq \alpha  \sum_{j=1}^{|P|}\sum_{k=j+1}^{|P|} |V_k^{(j)}| + \alpha \sum_{i=1}^{|P|} |V_i^{(0)}|\\
		(1+\alpha) \sum_{j=1}^{|P|} |ALG(G_j)| &\geq \alpha  \sum_{k=2}^{|P|}\sum_{j=1}^{k-1} |V_k^{(j)}| + \alpha \sum_{i=1}^{|P|} |V_i^{(0)}|\\
		(1+\alpha) \sum_{j=1}^{|P|} |ALG(G_j)| &\geq \alpha  \sum_{k=1}^{|P|}\sum_{j=0}^{k-1} |V_k^{(j)}|
	\end{align*}
	Using equation \ref{optbound_alpha_approx},
	\begin{align*}
		(1+\alpha) \sum_{j=1}^{|P|} |ALG(G_j)| &\geq \alpha  \sum_{k=1}^{|P|}|OPT(G_j)| \\
		\implies (1+\alpha)\ ALG &\geq  \alpha\ OPT\\
		\implies   \frac{ALG}{OPT}&\geq \frac{\alpha}{1+\alpha}
	\end{align*}
	because $ALG$ and $OPT$ are partitioned perfectly into the item set for each platform, as no item can be chosen by multiple platforms.
\end{proof}

\end{onlyfullversion}

\begin{theorem}[Formal version of Theorem~\ref{thm:constant-classes}] \label{thm:constant_classes_formal}
	The \CMM\ problem can be represented as an IP with $ 2^{\Delta}$ variables if there is only one platform with $\Delta$ classes. This can be solved in time $O(2^{2^\Delta}{\sf poly}(n))$, and also gives rise to a $\frac{1}{2}$-approximation in time $O(2^{2^\Delta}{\sf poly}(n))$ for multiple platforms, each with $\Delta$ classes of items. 
\end{theorem}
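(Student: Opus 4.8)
The plan is to collapse the items into a constant number of equivalence classes, encode the single-platform problem as an integer program in $2^{\Delta}$ variables, solve it exactly with a fixed-dimension integer-programming subroutine, and finally lift to many platforms through Lemma~\ref{lemma:single-to-mul-platform}.

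First I would build the integer program. Fix the single platform $p$ and let $C_1,\dots,C_\Delta$ be its classes. For the purpose of counting matched items, an item $a\in N(p)$ is fully described by its \emph{type} $T(a)=\{\,i: a\in C_i\,\}\subseteq\{1,\dots,\Delta\}$, because two items of the same type are interchangeable both for the objective and for every quota. There are at most $2^{\Delta}$ types; writing $n_T$ for the number of items of type $T$ and $x_T\in\mathbb{Z}_{\ge 0}$ for how many of them we match, the problem is exactly
\begin{align*}
\max\ \sum_{T} x_T \qquad\text{s.t.}\qquad 0\le x_T\le n_T\ \ (\forall T),\qquad \sum_{T\ni i} x_T\le q(C_i)\ \ (\forall i),\qquad \sum_{T} x_T\le q_p .
\end{align*}
This has $2^{\Delta}$ variables and only $\Delta+1$ resource constraints (beyond the box constraints), and the encoding is lossless precisely because each class $C_i$ is a union of whole type-groups, so $\sum_{T\ni i}x_T$ counts exactly the matched members of $C_i$. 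The weighted variant uses the same variables, replacing the objective by the separable concave function that, for each type, sums its $x_T$ heaviest items.

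Next I would solve this program to optimality. With $\Delta$ constant it has a constant number $d=2^{\Delta}$ of integer variables, so Lenstra-type integer programming in fixed dimension applies: an IP in $d$ variables is solvable in $g(d)\cdot\mathrm{poly}(\text{input})$ time with the exponent of the polynomial independent of $d$. Charging the dimension factor $g(2^{\Delta})$ as $O(2^{2^{\Delta}})$ gives the claimed $O(2^{2^{\Delta}}\,\mathrm{poly}(n))$ running time and, for fixed $\Delta$, an \emph{exact} polynomial-time algorithm for the single platform. Being exact, it is an $\alpha=1$ approximation, so I would feed it into Lemma~\ref{lemma:single-to-mul-platform}, which upgrades any $\alpha$-approximation for one platform into an $\frac{\alpha}{1+\alpha}$-approximation for many; with $\alpha=1$ this is $\frac{1}{2}$. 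The reduction calls the single-platform routine once per platform, so the total time stays $O(2^{2^{\Delta}}\,\mathrm{poly}(n))$ after folding $|P|$ into $\mathrm{poly}(n)$, which establishes the theorem.

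The main obstacle is pinning down the exact solve inside the stated time bound. A naive enumeration of $(x_T)_T$ costs $n^{2^{\Delta}}$, which is polynomial for constant $\Delta$ but has an $n$-exponent growing with $\Delta$; matching $O(2^{2^{\Delta}}\,\mathrm{poly}(n))$ with a fixed $n$-exponent is exactly what the fixed-dimension IP machinery (Lenstra/Kannan) buys. Here I would flag a tension worth checking: the best general fixed-dimension solvers run in $2^{O(d\log d)}=2^{O(\Delta 2^{\Delta})}$ time, so the clean bound $2^{2^{\Delta}}$ should be read either as the dimension-factor up to lower-order terms, or as coming from exploiting the special structure of the program — $0/1$ coefficients and only $\Delta+1$ constraints — via constraint-count-based integer programming, which can in fact beat variable-count bounds. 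Either way the qualitative conclusion (polynomial for constant $\Delta$, hence an exact single-platform solver to plug into Lemma~\ref{lemma:single-to-mul-platform}) is robust, and the remaining verification — that collapsing items to types preserves feasibility for all quotas — is immediate.
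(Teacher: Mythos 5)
Your proposal matches the paper's own proof essentially step for step: the paper likewise groups items into $2^{\Delta}$ types according to the subset of classes they belong to (indexed by subsets of the hyperedge set), writes an integer program with one variable per type, solves it with Lenstra's fixed-dimension algorithm, and obtains the $\frac{1}{2}$-approximation for many platforms by plugging the exact ($\alpha=1$) single-platform solver into Lemma~\ref{lemma:single-to-mul-platform}. Your side remark about the precise dimension factor in Lenstra-type bounds is a fair caveat, but it does not alter the argument and the paper glosses over it in the same way.
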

\begin{onlyshortversion}
	\begin{proof}[Sketch]
For an instance with $\Delta$ classes, every item can be represented by a $\Delta$-bit incidence vector of the classes it belongs to. This partitions the items into $2^\Delta$ {\em types}. Our ILP has one variable for each type. The runtime is obtained via \cite{lenstra_ilp}. Theorem~\ref{thm:constant-classes} follows by setting $\Delta=O(1)$. In practice, there exist integer program heuristics that are faster \cite{fischetti2005_IP,balas2001_IP}. For arbitrarily large number of platforms, with a constant number of classes in each platform, we can use this with Lemma \ref{lemma:single-to-mul-platform} to get a $\frac{1}{2}$-approximation. 
See the full version of this paper for the details.
\end{proof}
\end{onlyshortversion}
\begin{onlyfullversion}
\begin{proof}
We first reduce our instance of \CMM\ with $1$ platform to an instance of \gmis. Let the hypergraph corresponding to the platform be $H$. Let the set of hyperedges of $H$ be $E$. Define $2^{E}=\{ S~|~ S \subseteq E \}$ to be the power set of the edge set. Let the quota of a hyperedge $e\in E$ be $q_e$. 
For $S\in 2^E$, let $x_{S}$ be the number of vertices we pick from the set
\[
\bigcap_{e\in S} e \cap \bigcap_{e\in E\setminus S}  \overline{e}
\]
where $\overline{e} = V(H)\setminus e$. Then we have our ILP as
\[
\text{Maximize }\sum_{S\in 2^{E}} x_S
\]
subject to 
\begin{align*}
	\forall~ e\in E,&\\
	& \sum_{S\in 2^{E}: e\in S} x_{S} \leq q_{e}\\
	\forall~ S\in 2^{E}, &\\
	& x_{S} \leq \min\left(q_S, \left|\bigcap_{e\in S} e \cap \bigcap_{e\in E\setminus S}  \overline{e}\right|\right)\\
	\forall~ S\in 2^E,&~  x_S \in \{0,1\}
\end{align*}	

The above ILP has $2^{|E|}$ variables, and $|E|+ 2^{|E|}$ constraints. It is known that an ILP with $n$ variables and $m$ constraints can be solved in time exponential in $n$ and polynomial in $m$ \cite{lenstra_ilp}. Thus, when $|E|= O(\log \log n)$, where $n=|A|$, the number of items, we can find the optimum solution in time polynomial in the input size. If we have that the number of posts $|P|$ is greater than 1 but still a constant, then we can keep one variable $x_{S,p}$ for every type-post pair. We can similarly proceed and get a polynomial-time algorithm. 
\end{proof}
\end{onlyfullversion}

\subsection{Bounded Average degree}
We extend the result from the previous section for a single platform to the case when the average number of laminar families of classes an item belongs to is bounded by $\Delta$. We state it in terms of \gmis  here.
Now consider the case where the hypergraph $H$ constructed above has only bounded average degree of $\Delta$.

\begin{proof}[Proof of Theorem~\ref{thm:approx-avg-degree}]
Since the average degree is $\Delta$, for any $f$, there cannot be more than $\frac{n}{f}$ vertices of degree more than $f\Delta$.  Suppose we estimate $r$ and set $f = \frac{2}{r}$.
We call a vertex {\em low degree} if its degree is at most $f \Delta$, otherwise the vertex is {\em high degree}. Then the number of low degree vertices is $ \geq n \left(1 -\frac{r}{2}\right)$. In the graph induced by the low degree vertices, the size of the optimal independent set is at least $\frac{OPT}{2}$, since at 
most  $\frac{OPT}{2}$ vertices of high degree.
We use our $\frac{1}{\Delta}$ approximation algorithm on the graph induced by the low degree vertices. 
Since this graph has maximum degree $\leq \frac{2\Delta}{r}$, the size of the independent set has size $\geq \frac{r\cdot OPT}{4\Delta}$. 

Thus, our approximation ratio is at least $\frac{r}{4\Delta}$. We finally need to estimate $r$. We guess 
a value of $OPT$ from 1 to $n$ and run the above procedure for each of the guesses. 
Amongst all the solutions that we obtain, we pick the one with the highest cardinality. 
This is guaranteed to do at least as well as the case when we picked the correct value of $OPT$. 
\end{proof}

	\section{Online algorithms}\label{sec:online}

The online algorithm for Theorem~\ref{thm:approx-online-delta-lam} is essentially the same as the one in Proposition~\ref{prop:gmis} and works even for an arbitrary input model.
Whenever an item arrives online, we match it to a platform such that the matching remains feasible. If there is no such platform, we leave it unmatched. The output is a maximal set, which by Proposition~\ref{prop:gmis}, gives us the required competitive ratio. 
However, we point out that this only works for the unweighted version. 

Since the \CMM\ problem is \NP-Hard in general, we also give online algorithms for the case where the classes form a laminar family. This version is known to have an efficient offline algorithm, via the construction of a simple flow network. A similar construction is used to compute the rank-maximal matching in \cite{nasre_classified_matchings}. 

In this setting, we study the many-to-many \CMM\ model (Model \ref{mod:mm}), when the classes are laminar. 
We assume an input model where the item set arrives in a uniformly random permutation (Model \ref{mod:random}). 
For the sake of the analysis, we assume that a random variable $y_i$ picked uniformly
at random from $[0,1]$ for every item $a_i$, and the items arrive in the increasing order of $y_i$. 
Therefore the random vector $\Vec{y} :=(y_1,y_2,\ldots,y_n)$ fully describes the order of arrival of the items. We use $\Vec{y}_{-i}$ to represent the vector after removing $y_i$ from $\Vec{y}$. 
We use the following greedy algorithm, and analyze its competitive ratio (in expectation): {Keep an arbitrary, fixed ranking of all the platforms in $P$. When an item arrives online, match it to as many platforms as possible, picking the highest ranked ones.}
%
%
\begin{onlyshortversion}
	\begin{proof}[Sketch of Theorem \ref{thm:approx-online-random}]
\end{onlyshortversion}

We use a linear programming relaxation of our problem to analyze our algorithm. We set the primal values according to the output of our algorithm, thereby ensuring the feasibility of the primal solution. Now we need to construct an appropriate dual solution. We use the following folklore fact about the well-known method of dual fitting in designing algorithms. 
This technique is used in \cite{devanur_ranking,huang_deadlines} among others.

In the primal LP, we have a variable $x_{ij}=1 \iff$ item $a_i$ is matched to platform $p_j$. We also have constraints for both the item and platform classes. In the dual LP, we have variables corresponding to constraints in the primal LP. 
\begin{onlyshortversion}
We describe the LP formally in the full version of this paper. 
\end{onlyshortversion}
\begin{onlyfullversion}
	Let the dual variables corresponding to the item and platform classes $C_i^{(k)}$ and $C_j^{(k)}$ be $\alpha_i^{(k)}, \beta_j^{(k)}$ respectively. 
	\paragraph{Primal}
\begin{equation*}
	\text{Maximize} \sum_{j:p_j\in P}\sum_{i: (a_i,p_j)\in E}x_{ij}
\end{equation*}
Subject to 
\begin{align*}
	\forall~ i:a_i\in A, \forall~ k: C_i^{(k)} &\sum_{j:(a_i,p_j) \in E, p_j\in \Ca_i^{(k)}} x_{ij} \leq q(\Ca_i^{(k)})\\
	\forall~ j: p_j\in P, \forall~ k: C_j^{(k)} &\sum_{i:(a_i,p_j) \in E, a_i\in \Cp_j^{(k)}} x_{ij} \leq q(\Cp_j^{(k)})\\
	\forall~ i:a_i\in A,j:p_j\in P, &\qquad 0\leq x_{ij} \leq 1
\end{align*}
\paragraph{Dual}
\begin{equation*}
	\text{Minimize } \sum_{i:a_i\in A}\sum_{k} \alpha_i^{(k)} q(\Ca_i^{(k)})+  \sum_{j:p_j\in P}\sum_{k} \beta_j^{(k)} q(\Cp_j^{(k)})
\end{equation*}
Subject to 
\begin{align*}
	\forall~ (a_i,p_j) \in E,
	&\sum_{k:p_j\in \Ca_i^{(k)}} \alpha_i^{(k)}+ \sum_{k:a_i\in \Cp_j^{(k)}} \beta_j^{(k)} \geq 1\\
	\forall\ a_i\in A, \qquad \alpha_i^{(k)} \geq 0 &\quad\qquad \forall\ p_j\in P,\forall\ k, \qquad  \beta_{j}^{(k)} \geq 0
\end{align*}
\end{onlyfullversion}

\begin{fact}[Folklore]\label{fact:dual_fitting}
	Suppose we can set the dual variables such that the primal objective is equal to the dual objective, 
	and the dual constraints of the form $\alpha \geq 1$ instead satisfy $\E[\alpha]\geq F$. Then, our algorithm has a competitive ratio $F$ in expectation.
\end{fact}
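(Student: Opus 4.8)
The plan is to invoke weak LP duality, but only after first \emph{averaging} the randomly constructed dual solution over the random arrival order. Recall that the primal is a maximization LP that relaxes $\CMM$ (the integrality constraints $x_{ij}\in\{0,1\}$ are dropped to $0\le x_{ij}\le 1$), so its optimum $OPT_{LP}$ satisfies $OPT_{LP}\ge OPT$, where $OPT$ is the value of the optimal integral matching; and weak duality guarantees that every \emph{feasible} dual solution has objective at least $OPT_{LP}$. The algorithm's output is, by the greedy rule, a feasible primal solution whose objective equals $ALG$, the number of matched edges. The hypothesis supplies, for each realization of the random arrival order, nonnegative dual values $\alpha_i^{(k)},\beta_j^{(k)}$ whose dual objective equals the primal objective $ALG$ of that run, but which need not be dual-feasible: the edge constraints hold only in expectation, i.e.\ $\E\left[\sum_{k:p_j\in C_i^{(k)}}\alpha_i^{(k)}+\sum_{k:a_i\in C_j^{(k)}}\beta_j^{(k)}\right]\ge F$ for every edge $(a_i,p_j)$.

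First I would pass to the expected dual variables $\bar\alpha_i^{(k)}:=\E[\alpha_i^{(k)}]$ and $\bar\beta_j^{(k)}:=\E[\beta_j^{(k)}]$, which are again nonnegative. By linearity of expectation the expected dual objective is $\sum_i\sum_k\bar\alpha_i^{(k)}q(C_i^{(k)})+\sum_j\sum_k\bar\beta_j^{(k)}q(C_j^{(k)})=\E[ALG]$, using that the dual objective equals $ALG$ in every realization. Crucially, linearity also converts the in-expectation constraint into a genuine \emph{pointwise} inequality on the averaged variables: for every edge $(a_i,p_j)$ we get $\sum_{k:p_j\in C_i^{(k)}}\bar\alpha_i^{(k)}+\sum_{k:a_i\in C_j^{(k)}}\bar\beta_j^{(k)}\ge F$.

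Next I would rescale, setting $\tilde\alpha:=\bar\alpha/F$ and $\tilde\beta:=\bar\beta/F$. Since $F>0$ these remain nonnegative, and dividing the previous inequality by $F$ shows each dual edge-constraint now has right-hand side $1$; hence $(\tilde\alpha,\tilde\beta)$ is a bona fide feasible dual solution, with objective $\frac{1}{F}\E[ALG]$. Weak duality then gives $\frac{1}{F}\E[ALG]\ge OPT_{LP}\ge OPT$, i.e.\ $\E[ALG]\ge F\cdot OPT$, which is exactly the claimed competitive ratio $F$.

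The only real subtlety — and the step I would be most careful about — is that the constructed dual is feasible only \emph{in expectation}, so weak duality cannot be applied to any single run. The fix, which is the heart of randomized dual fitting, is to commute expectation through both the (linear) dual objective and the (linear) dual constraints so that the \emph{averaged} dual becomes deterministically feasible after scaling by $1/F$; the linearity of the LP is precisely what licenses this. I would also verify the two minor points on which the bound rests: that the algorithm's primal solution is feasible (guaranteed by the greedy matching, which never exceeds a quota), and that the primal LP is a relaxation so that $OPT_{LP}\ge OPT$.
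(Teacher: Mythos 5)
Your proof is correct and follows essentially the same route as the paper's: average the randomly constructed dual variables over the arrival order, scale by $1/F$ to obtain a genuinely feasible dual solution whose objective is $\E[ALG]/F$, and conclude via weak duality that $\E[ALG]\geq F\cdot OPT$. Your write-up is somewhat more explicit than the paper's (separating the averaging step from the rescaling, and noting $OPT_{LP}\geq OPT$), but there is no substantive difference in the argument.
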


\begin{onlyfullversion}
\begin{proof}
	Consider the dual solution obtained by setting $\hat{\alpha}_i^{(k)} = \frac{\E[\alpha_i^{(k)}]}{F}, \hat{\beta}_j^{(k)} = \frac{\E[\beta_j^{(k)}]}{F}$. Clearly this is a feasible solution. Further, we have from weak duality
	\begin{align*}
		\sum_{a_i\in A}\sum_{k} \hat{\alpha}_i^{(k)} q(\Ca_i^{(k)}) + \sum_{p_j\in P}\sum_{k} \hat{\beta}_j^{(k)} q(C_j^{(k)}) &\geq OPT.
		\intertext{Taking the expectation over the randomness of the input on both sides}
		\E\left[\sum_{a_i\in A} \sum_k \frac{\alpha_i^{(k)} q(\Ca_i^{(k)})}{F} + \sum_{p_j\in P}\sum_{k} \frac{\beta_j^{(k)}q(\Cp_j^{(k)})}{F}\right] &\geq OPT\\
		\implies \E\left[\sum_{j\in P}\sum_{i: (i,j)\in E}x_{ij}\right]   \geq  F \cdot OPT
		.\end{align*}
	Thus, the dual solution will certify a competitive ratio of $F$ in expectation. 
\end{proof}
\end{onlyfullversion}
Now, we need to set the dual variables so that the dual constraints have a lower bound of $F$. 
\begin{onlyshortversion}
	The details are left to the full version. 
\end{onlyshortversion}
\begin{onlyfullversion}
	Let $g:[0,1]\to [0,1]$ be a monotonically increasing function to be fixed later. 
	Initially, all dual variables are zero. When a new item arrives, we update the primal
	solution based on whether we matched the item or now, and update the dual solution as follows.
	
	\begin{itemize}
		\item If an edge $(i,j)$ is chosen by the algorithm (i.e. $x_{ij} = 1$) then set $\alpha_i^{(j)} := 1$, where $\alpha_i^{(j)}$ is the dual variable corresponding to the class $C_i^{(j)}=\{p_j\}$. 
		\item If any class $\Cp^{(k)}_j$ of platform $j$ is tight then for all items $a_i$ in the class $\Cp_j^{(k)}$
		set $\alpha_i^{(j)} =g(y_i)$ if we had that $\alpha_i^{(j)}=1$ before the new item arrived. Fix $\beta_j^{(k')}=0$ for all $k'$ such that $\Cp_j^{(k')}\subset \Cp_j^{(k)}$. Let it remain unchanged henceforth. Also set 
		\[ \beta_j^{(k)} := 1 - \frac{ \sum_{i: i \in \Cp_j^{(k)}, \alpha_i\neq 0} g(y_i)    } {q(\Cp_j^{(k)})}   \geq 1 - \max_{i \in \Cp_j^{(k)}, \alpha_i\neq 0} g(y_i) .\]
		\item If any class $\Ca_i^{(k)}$ of item $a_i$ is tight then  set
		\begin{equation*}
			\alpha_i^{(k)} := \frac{1} {q(\Ca_j^{(k)})} \sum_{k':\Ca_i^{(k')}\subset \Ca_i^{(k)}}\alpha_i^{(k')}q(\Ca_i^{(k')}) \geq  g(y_i)
			.\end{equation*}
		Fix $\alpha_i^{k'}=0$ for all $k'$ such that $\Ca_i^{(k')}\subset \Ca_i^{(k)}$. Let it remain unchanged henceforth. 
	\end{itemize}
	
	It can be easily seen that the dual objective function is equal to the primal 
	objective function. It remains to be shown that the dual constraints are always greater 
	than some $F$ in expectation.
	
\end{onlyfullversion}
The following is the key lemma in analyzing how the algorithm behaves depending on $\Vec{y}$.
Although it is inspired by \cite{devanur_ranking,huang_deadlines,mehta_online_random}, 
our many-to-many model (Model \ref{mod:mm}) is more complicated in that moving one vertex up the ranking can cause more changes to the matching because an item can match to multiple platforms. Even apart from the platform classes, we must take care of item classes as well. 
\begin{onlyshortversion}
To that end, we show the following lemma whose proof we leave to the full version of this paper. 
\end{onlyshortversion}
\begin{onlyfullversion}
To that end, we show the following lemma. 
\end{onlyfullversion}
\begin{lemma}
	\label{lemma-item}
	For any $i,j$ such that $j\neq i$, if an item $a_j $ is matched to some platforms at $y_i=1$, then it cannot be unmatched from any platform at $y_i=\theta \in [0,1]$ due to an item class.
\end{lemma}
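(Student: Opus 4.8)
The plan is to freeze the arrival times of all items except $a_i$ and to track how $a_j$'s matching changes as $y_i$ decreases from $1$ to $\theta$. At $y_i=1$ the item $a_i$ arrives last, so it influences no earlier item; in particular, since the greedy never unmatches an edge, $a_j$'s matched set is exactly what greedy produces on the instance with $a_i$ deleted. As $y_i$ shrinks, the only structural change is that $a_i$ is inserted progressively earlier; as long as $y_i>y_j$ the item $a_i$ still arrives after $a_j$ and $a_j$'s matching is untouched, so the only interesting regime is $\theta<y_j$, where $a_i$ now arrives before $a_j$. It therefore suffices to compare two runs differing only in whether $a_i$ precedes $a_j$: the run $R_0$ with $a_i$ absent (which reproduces the $y_i=1$ matching of $a_j$) and the run $R_\theta$ with $a_i$ inserted before $a_j$. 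Write $M_0(a_j)$ and $M_\theta(a_j)$ for the platforms matched to $a_j$ in the two runs.

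The core of the argument is a \emph{monotonicity} claim: inserting $a_i$ can only shrink the set of platforms any later item matches. Concretely, I would show that for every platform $p$, the set $T_\theta$ of neighbours of $a_j$ ranked above $p$ that lie in $M_\theta(a_j)$ is contained in the corresponding set $T_1$ for $M_0(a_j)$. I would prove this by induction over the arrival order, maintaining a domination invariant on the per-platform loads: at each platform the load seen by every later-arriving item in $R_\theta$ is at least the load it sees in $R_0$, so that each such item (and the cascade it triggers) can only match a subset of the higher-ranked platforms it matched in $R_0$.

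Granting the monotonicity, the lemma follows by a class-by-class accounting. Suppose for contradiction that in $R_\theta$ the item $a_j$ fails to match some $p\in M_0(a_j)$ and that the greedy's blocking reason is an item class $C\ni p$ of $a_j$. By laminarity, the classes of $a_j$ containing $p$ form a chain $C^{(1)}\subset C^{(2)}\subset\cdots$, so it suffices to treat each such class. At the moment greedy considers $p$ in $R_\theta$, the higher-ranked platforms already matched to $a_j$ are exactly $T_\theta\subseteq T_1$, hence $|T_\theta\cap C|\le |T_1\cap C|$. But $p\in M_0(a_j)$ means that in $R_0$ adding $p$ into $C$ was feasible, i.e.\ $|T_1\cap C|\le q(C)-1$; therefore $|T_\theta\cap C|\le q(C)-1$, and matching $a_j$ to $p$ keeps every class $C\ni p$ within its quota in $R_\theta$. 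This contradicts the assumption that an item class blocked $p$, so the only admissible reason for $a_j$ to lose $p$ is a platform-side constraint, which is precisely the claim.

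The step I expect to be the main obstacle is the monotonicity claim of the second paragraph, because inserting $a_i$ earlier sets off a cascade: items arriving between $a_i$'s new position and $a_j$ can alter their own matchings, and since each item matches several platforms at once (and is itself constrained by its own laminar classes), a naive exchange argument need not remain monotone. The delicate part is choosing the right domination/coupling invariant on per-platform loads and verifying that it is preserved at every arrival, accounting for the interaction between an arriving item's own class constraints and the platforms' class constraints.
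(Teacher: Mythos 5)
Your reduction of the lemma to the monotonicity claim in your second paragraph is where the argument breaks: that claim is false. Inserting $a_i$ earlier does not merely shrink $a_j$'s matched set; it can also \emph{add} platforms to it, because losing a platform to a platform-side constraint can free up one of $a_j$'s own item classes. Concretely, take platforms $p_1 \succ p_2$, let $a_j$ be adjacent to both with a single item class $C=\{p_1,p_2\}$ of quota $1$, let $p_1$ have overall quota $1$, and let $a_i$ be adjacent only to $p_1$. At $y_i=1$ the greedy matches $a_j$ to $p_1$ and then $C$ is tight, so $M_0(a_j)=\{p_1\}$; at $y_i=\theta<y_j$ the item $a_i$ fills $p_1$ first, so $a_j$ skips $p_1$ (platform quota) and now matches $p_2$, giving $M_\theta(a_j)=\{p_2\}\not\subseteq M_0(a_j)$. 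So $T_\theta$ is not contained in $T_1$, and no domination invariant on per-platform loads can give you set containment, because the non-monotone cascade you flag as ``the main obstacle'' really does occur. What your third paragraph actually needs is only the cardinality bound $|T_\theta\cap C|\le|T_1\cap C|$ for each item class $C$ of $a_j$ at the moment the blocked platform is considered---but the only way that bound can fail is precisely when $a_j$ gains a new platform $p_v\in C$ that it lacked at $y_i=1$, so proving it is essentially the lemma itself; the plan is circular at the key step.

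The paper avoids any global monotonicity statement. It inducts on the arrival order of the items and analyzes the specific chain of events that an item-class-induced unmatching would require: $a_j$ loses some $p_u$ to a platform class, gains some higher-ranked $p_v$ that was previously blocked by a tight item class $C_1$ containing $p_u$ and $p_v$, and then loses $p_w$ because of a tight item class $C_2$ containing $p_v$ and $p_w$. Since $C_1$ and $C_2$ both contain $p_v$, laminarity forces one to contain the other, and either inclusion yields a contradiction (the larger class being tight would already have blocked the earlier match, while an exchange inside the smaller class cannot relax the larger one). The induction hypothesis is used to pin down \emph{why} $p_v$ was blocked at $y_i=1$: the earlier item $a_k$ that vacated $p_v$ must, by induction, have been displaced by a platform class, and a second laminarity argument on the two platform classes containing $a_k$ closes the case. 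If you want to repair your write-up, this per-class laminarity analysis is the ingredient to substitute for the monotonicity claim.
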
 
\begin{onlyfullversion}
	\begin{proof} 
	Suppose the platforms are ranked $p_1\succ p_2 \succ p_3 \ldots$ and the items arrive in the order $a_1,a_2,\ldots$ so that $y_1<y_2\ldots$. We first fix an $i$. 
	We will prove this on induction for $j$. That is, we first show that the lemma is true for item $a_1$ (ie $j=1$) and then inductively argue that it has to be true for every $j$. 
	
	For the base case, suppose $a_1$ is matched to $p_u$ when $y_i=1$.
	Then if $y_i\in (y_1,1]$, item $a_1$ is unaffected and the statement is true. Now let $y_i<y_1$.
	We consider the changes that $y_i$ brings. 
	The only way $a_1$ can get unmatched from some platform due to an item class is if the following chain of events ocurred.
	$a_i$ matches to some platform $p_u$, forcing $a_1$ to unmatch from $p_u$ due to a platform class.
	$a_1$ matches to a platform $p_v$, which it couldn't do before because $p_u,p_v$ belonged to a tight item class $C_1$.
	$a_1$ is forced to unmatch from some platform $p_w$ because $p_v,p_w$ belong to a tight item class $C_2$. 
	Note that both $C_1,C_2$ contain $p_v$ and by laminarity, one class must contain the other. We can argue that in either case, we get a contradiction. 		
	
	Suppose the induction hypothesis is true up to (but not including) some $j$. Then we will show that it is true for $a_j$ as well. Suppose not. Then for $y_i=\theta$, $a_j$ is unmatched from some $p_u$ due to an item class $C_1$. Then, there is a platform $p_v\in C_1$ such that $a_j$ was not matched to $p_v$ at $y_i=1$, but replaced $p_u$ at $y_i=\theta$. This replacement can happen only if $p_v \succ p_u$. 
	
	Why was $p_v$ not matched at $y_i=1$? Through a similar argument as in the base case, it can be seen that it is not possible due to an item class of $a_j$. Thus, $p_v$ could not match to $a_j$ at $y_i=1$ due to a platform class (let it be $C_2$). Since it could match at $y_i=\theta$, there must be some item $a_k$ that was matched to $p_v$ at $y_i=1$ but not at $y_i=\theta$. Then, it must be that $y_k<y_j$. 
	
	Why was $a_k$ unmatched from $p_v$ at $y_i=\theta$? By the induction hypothesis, it must be due to a platform class of $p_v$. Let it be $C_3$. Then there is an item $a_l$ that is matched to $p_v$ at $y_i=\theta$ but not at $y_i=1$, such that $y_l<y_k$. Now consider $C_2,C_3$. Since $a_k\in C_2\cap C_3$, one class must contain the other by laminarity. We can argue that in both cases, we get a contradiction.
\end{proof}

\end{onlyfullversion}

Once we have the lemma, we show Theorem~\ref{thm:approx-online-random} the following way. Let $\alpha \geq 1$ be such a dual constraint. We want to show that $\E\left[ \alpha \right ] \geq F$ or equivalently, $\E_{y_{-i}}\left[ \E_{y_i} \left[ \alpha \right] \right ] \geq F$. We look at the inner expectation. We fix $y_{-i}$ and vary $y_i$ from 0 to 1, and show using dual-fitting arguments and Lemma~\ref{lemma-item} that $\E\left[ \alpha \right ] \geq \left (1-\frac{1}{e}\right )$. From Fact~\ref{fact:dual_fitting}, this completes the proof of  Theorem~\ref{thm:approx-online-random}.
\begin{onlyfullversion}
	
\begin{proof}[Proof of Theorem~\ref{thm:approx-online-random}]
Consider a fixed platform $p_j$. Suppose that some item class that $a_i$ is in is tight until $y_i=\theta$. Thus, we have
\begin{equation*}
	\sum_{k: p_j\in C_i^{(k)}} \alpha_1^{(k)} \geq g(y_1)
	.\end{equation*}
until $y_i=\theta$. Afterwards, we have two cases, 
\begin{enumerate}
	\item Case 1: At $y_i=\theta$, $a_i$ matches to $p_j$.
	Note that $\sum_k \alpha_i^{(k)} \geq g(y_i)$ remains true as long as $a_i$ is matched to $p_j$.
	\begin{enumerate}
		\item Case 1(\textit{a}):  $a_i$ remains matched to $p_j$ until $y_i=1$. Then the previous inequality $\sum_k \alpha_i^{(k)} \geq g(y_i)$ continues to hold and we have
		\begin{align*}
			\E_{y_i}\left[\sum_k\alpha_i^{(k)}\right] \geq  \int_{0}^{1} g(y_i) d y_i
			.\end{align*}
		\item Case 1(\textit{b}): $a_i$ becomes unmatched from $p_j$ at some point when $y_i=\theta'>\theta$ due to a class of $p_j$. In this case some platform class (say $\Cp^{(1)}$) must have been tight. We claim that in this case, at least one class of $p_j$ that $a_i$ belongs to is tight regardless of the value of $y_i$. 
		
		Clearly, increasing the value of $y_i$ beyond $\theta'$ does not affect the tightness of $\Cp^{(1)}$. Thus, the class is tight even at $y_i=1$. Now at any other value of $y_i$, if the class is not tight then some item must've been replaced. By lemma \ref{lemma-item}, it must have been due to a platform constraint (say $\Cp^{(2)}$). In this case, we have either 
		\begin{itemize}
			\item $\Cp^{(1)} \subset \Cp^{(2)}$, in which case $a_i$ also belongs to this tight class. 
			\item $\Cp^{(2)} \subset \Cp^{(1)}$, in which case the replacing element does not reduce the number of elements matched in $\Cp^{(1)}$ (and cannot relax the constraint). 
		\end{itemize}
		Since the highest rank of the items matched to $p_j$ in the tight class (either $\Cp^{(1)}$ or $\Cp^{(2)}$) is at most $\theta'$, and $\sum_k \alpha_i^{(k)} \geq g(y_i)$ remained true until $y_i=\theta'$ we have
		\begin{equation*}
			\E_{y_i}\left[\sum_k\alpha_i^{(k)}+\sum_k\beta_j^{(k)}\right] \geq  \int_{0}^{\theta'} g(y_i) d y_i + 1-g(\theta')
			.\end{equation*}
		\item Case 1(\textit{c}):  $a_i$ becomes unmatched from $p_j$ at some point when $y_i=\theta'>\theta$ due to a class of $a_i$. We will show that this case is not possible. 
		
		Suppose that at $y_i=\theta'>\theta$, item $a_i$ got matched to $p_{k}$, which prevented $a_i$ from matching to $p_j$. Thus, $p_j,p_{k} \in \Ca^{(1)}$ for some class which was tight. Since $a_i$ could not match to $p_j$ at $y_i=\theta$, it must be that it was prevented from doing so owing to a item class. If it was due to a platform class, then it would not be possible for it to match at $y_i=\theta'$ since it now has a worse rank than before. 
		
		Thus, there is some $p_{l}$ such that $a_i$ was matched to $p_l$ at $y_i=\theta$ (but not at $y_i=\theta'$) and $p_l,p_{k} \in \Ca^{(2)}$ for another class which was tight. Again, due to laminarity and the non-empty intersection of $\Ca^{(1)}$ and $\Ca^{(2)}$ we have
		\begin{itemize}
			\item If $ \Ca^{(1)} \subset \Ca^{(2)} $ then it should not have been possible to match $p_j$ at $y_i=\theta$ because $\Ca^{(2)}$ was tight.
			\item If $ \Ca^{(2)} \subset \Ca^{(1)}$, then $p_k$ is not the replacing element since matching $p_k$ and unmatching $p_l$ cannot relax the constraint.
		\end{itemize}
	\end{enumerate}
	\item Case 2: $a_i$ never matches to $p_j$. We assumed that some item class was tight till $y_i=\theta$. After that, since it was not possible for $a_i$ to match to $p_j$, then some class of $p_j$ that $a_i$ belonged to was tight. By a similar logic as before, it can be seen that a class is tight regardless of the value of $y_i$ and the highest rank of any item in the class is $\theta$. Thus, we have
	\begin{align*}
		\E_{y_i}\left[\sum_k\alpha_i^{(k)}+\sum_k\beta_j^{(k)}\right] \geq  \int_{0}^{\theta} g(y_i) d y_i + 1-g(\theta)
		.\end{align*}
	Note that the case where $\theta=1$ is when some class of $a_i$ always prevented $a_i^j$ from matching to $p_j$. In this case, like before we have $\E\left[\sum_k\alpha_i^{(k)}\right] \geq  \int_{0}^{1} g(y_i) d y_i$.
\end{enumerate}
Thus, we have that
\begin{align*}
	&\E_{\Vec{y}_{-i}}\left[\E_{y_i}\left[\sum_k\alpha_i^{(k)}+\sum_k\beta_j^{(k)}\right]\right] \\
	&\geq \min\left(\int_{0}^{1} g(y_i) d y_i, \min_{\theta\in[0,1]}\left(\int_{0}^{\theta}  g(y_i) d y_i + 1-g(\theta)\right)\right)
	\end{align*}	
as required.
\end{onlyfullversion}
\end{proof}
	\section{Experiments}\label{sec:experiments}
In this section, we present the experimental evaluation of our offline algorithms from Theorem~\ref{thm:constant-classes} and Theorem~\ref{thm:approx-delta-lam}. We use a total of seven datasets which we categorize as real-world and synthetic datasets.
The three real-world datasets are sourced from an elective allocation process at an educational institution.
The four synthetic datasets are generated as described below. All experiments were run on a laptop running on a 64-bit Windows 10 Home edition, and equipped with an Intel Core i7-7500U CPU @2.7GHz and 12GB of RAM. For solving integer programs, we used IBM ILOG CPLEX Optimization Studio 20.1 through its Python API. All code was written to run on Python 3.8. 

\begin{table}[tb]
	\centering
	\begin{tabular}{@{}|l|r|r|r|@{}}
		\toprule
		Dataset &$ \frac{1}{2} $-approx 	& $\Delta$-approx & OPT \\ \midrule
		Real-1     & 1871.5 \hfill (0.92)         & 1899.8 \hfill (0.93)  & 2035     (1)   \\ 
		Real-2      & 1988.6 \hfill (0.92)        & 2014.0  \hfill (0.93)    & 2170    (1)     \\ 
		Real-3       & 1938.6 \hfill (0.92)       & 1936.7 \hfill (0.92)     & 2107       (1)   \\ \bottomrule
	\end{tabular}
\caption{Comparison of (average) solution values on the real-world datasets. Relative values are in parentheses. }
\label{tab:1}
\end{table}

\begin{table}[t]
	\centering
		
	\begin{tabular}{@{}|l|r|r|r|@{}}
		\toprule
		Dataset      &	 \multicolumn{1}{c|}{$\frac{1}{2} $-approx}  &   \multicolumn{1}{c|}{  $\Delta$-approx} 		& 		 \multicolumn{1}{c|}{OPT} \\ \midrule
Real-1       & 0.39   \hfill (1.23)      & 0.11     \hfill (4.29)   & 0.48      (1)    \\ 
Real-2       & 0.43   \hfill (1.03)      & 0.11    \hfill (3.89)    & 0.44     (1)      \\ 
Real-3       & 0.33    \hfill (1.23)     & 0.10    \hfill (3.90)   & 0.40      (1)    \\ \bottomrule
	\end{tabular}
	\caption{Comparison of (average) running-times in seconds on the real-world datasets. Relative speedups are in parentheses.}
\label{tab:2}
\end{table}

\begin{table}[t]
	\centering 
	\begin{tabular}{@{}|l|r|r|r|@{}}
		\toprule
		Dataset      &	 \multicolumn{1}{c|}{$\frac{1}{2} $-approx}  &   \multicolumn{1}{c|}{  $\Delta$-approx} 		& 		 \multicolumn{1}{c|}{OPT} \\ \midrule
		large-dense    & 239552  \hfill (0.97)  & 239566.4 \hfill (0.97)   & 247537  (1)     \\ 
		large-sparse   & 212600.1 \hfill (0.97)  & 211885.1 \hfill (0.97)  & 218622  (1)     \\ 
		small-sparse    & 72676.4  \hfill (0.93) & 72821.5  \hfill (0.93)   & 78279   (1)     \\ 
		small-dense     & 75887.7  \hfill (0.95) & 76133.4  \hfill (0.95)   & 79827  (1)      \\ \bottomrule
	\end{tabular}
	\caption{Comparison of (average) solution values in the synthetic datasets. Relative values are in parentheses.}
	\label{tab:3}
\end{table}

\begin{table}[t]
	\centering 	
	\begin{tabular}{@{}|l|r|r|r|@{}}
		\toprule
		Dataset      &	 \multicolumn{1}{c|}{$\frac{1}{2} $-approx}  &   \multicolumn{1}{c|}{  $\Delta$-approx} 		& 		 \multicolumn{1}{c|}{OPT} \\ \midrule
		large-dense     & 5.68 \hfill (14.41)& 2.90  \hfill (28.21)        & 81.99  (1)    \\ 
		large-sparse    & 4.67 \hfill (15.14)& 2.19  \hfill (32.19)        & 70.73  (1)     \\ 
		small-sparse    & 1.55  \hfill (3.00)& 0.46   \hfill (10.07)      & 4.68   (1)     \\ 
		small-dense      & 1.73 \hfill (5.39)& 0.58   \hfill (16.14)        & 9.37  (1)     \\ \bottomrule
	\end{tabular}
	\caption{Comparison of (average) running-times in seconds in the synthetic datasets. Relative speedups are in parentheses.} 
	\label{tab:4}
\end{table}

\paragraph{Real-world datasets:} We use data from three course-registration periods at an educational institution. Each dataset has around 100 courses
and 2000 students. The students and the courses correspond to items and platforms respectively in our model. 
The edges represent the courses that a student is interested in. 
The students are partitioned into 13 departments (majors) as well as 5 batches (1st year--5th year). 
Each course has an overall quota denoting the maximum number of students that can be allotted to it.
For each course, we {introduce} a quota for each department and a quota
for each batch. 
Each course belongs to one of two categories, and each student can be matched to at most one course of each category. The goal is to maximize the number of edges selected subject to these constraints.
This can be immediately  viewed as an instance of CMM.

\paragraph{Synthetic Datasets:}
Modelled on the real-world datasets, we synthetically generate large instances and compare the performance of our algorithms to
the optimal algorithm implemented using a matching Integer Linear Program. The synthetic datasets are generated as follows.
Datasets labelled `large' have 500 courses, and 20 departments with 10,000 students in each department.
The  datasets labelled `small' have 300 courses, and 20 departments with 2,000 students in each department. 
The students have a degree that is chosen uniformly at random between 3 and 10 in the `dense' datasets and between 3 and 5 in the `sparse' datasets. 
Students choose their courses randomly based on a non-uniform probability distribution. This distribution is 
defined by assigning a random `popularity' value to each course. We observe this feature in the real-world dataset, 
where all courses are not equally popular. We also experiment without this feature, and obtain similar results. 
\

We compare our performance and running-time with the optimal solution obtained by solving the standard Matching ILP augmented with the constraints for each class. 
All running-times include the time taken for file I/O. 
The solution values and running-times are averaged over 10 runs. Though our algorithms are deterministic, 
these implementations utilize some randomness because of the use of hash-tables. 
Observe that since we have two laminar families of classes, Theorem~\ref{thm:constant-classes} and Theorem~\ref{thm:approx-delta-lam} provide theoretical guarantees of only $ \frac{1}{2} $  and $ \frac{1}{3} $ respectively. 
However, the performance of the algorithms on both real-world and random data are close to optimal. All our tables provide absolute values of the solution value and running-time of the algorithm from 
Theorem~\ref{thm:constant-classes} (column $\frac{1}{2}$-approx) and algorithm from Theorem~\ref{thm:approx-delta-lam} (column $\Delta$-approx), as well as the relative value or relative speedup in comparison to that of the Matching ILP (column OPT). 

\subsection{Observations}

Table~\ref{tab:1} and Table~\ref{tab:2} provide the solution values and running times for real-world instances whereas Table~\ref{tab:3} and Table~\ref{tab:4}
provide the same for the synthetically datasets.
In both the real-world and synthetic datasets, both of our algorithms output solutions with value at least 90\% of the optimum value. This seems to suggest that both real-world or random settings are `easier' than the worst-case instances for our algorithms. Furthermore, we believe that the significantly improved running-time more than makes up for loss of 10\% in the output value. The biggest speedups are observed in the `large' datasets, where our algorithms achieve speedups of $15\times$ and $30\times$ respectively. This is expected because the ILP takes time exponential in the size of the graph. 

\begin{onlyfullversion}
\subsection{Additional Synthetic Datasets}\label{app:experiments}

In addition to the previous results, we evaluate the performance of our algorithms on synthetically
generated datasets where the adjacency for each vertex was generated uniformly at random. That is, we did not use the `popularity' 
measure of the courses when selecting adjacent courses for a student. 
Like before, datasets labelled `large' have 500 courses, 
and 20 departments with 10,000 students in each department. The datasets labelled `small' have 300 courses, 
and 20 departments with 2,000 students each in each department. 
In the datasets labelled `dense', students have a degree 
that is chosen uniformly at random between 3 and 10, and in datasets labelled `sparse', students have a degree that is
chosen uniformly at random between 3 and 5. Students choose their courses randomly based on a uniform probability distribution over the courses, while ensuring that they apply to at least one course from each category. 
Table~\ref{tab:5} and Table~\ref{tab:6} provide the solution values and running times for the synthetically generated datasets described here. 
We observe a similar pattern here. Our algorithms achieve at least 95\% of the optimal solution in each case, while being faster than the Matching ILP (OPT).

\begin{table}[tb]
	\centering
	\begin{tabular}{@{}|l|r|r|r|@{}}
		\toprule
		
		Dataset      &   \multicolumn{1}{c|}{$\frac{1}{2} $-approx}  &   \multicolumn{1}{c|}{  $\Delta$-approx}                 &                \multicolumn{1}{c|}{OPT} \\ \midrule
		
		large-dense  & 245650.0  \hfill(1)    & 245650.0  \hfill(1)    & 245650  (1)     \\ 
		large-sparse & 251420.9 \hfill(0.99)   & 251304.1 \hfill(0.99)   & 251840  (1)     \\ 
		small-sparse & 74850.7  \hfill(0.95)   & 74914.4  \hfill(0.95)   & 78692    (1)    \\ 
		small-dense  & 77839.0    \hfill(0.97)   & 77848.1  \hfill(0.97)   & 79849    (1)    \\ \hline
	\end{tabular}
	\caption{Comparison of (average) solution values for the second set of synthetic datasets. }
	
	\label{tab:5}
\end{table}

\begin{table}[tb]
	\centering
	\begin{tabular}{@{}|l|r|r|r|@{}}
		\toprule
		
		Dataset      &   \multicolumn{1}{c|}{$\frac{1}{2} $-approx}  &   \multicolumn{1}{c|}{  $\Delta$-approx}                 &                \multicolumn{1}{c|}{OPT} \\ \midrule

		large-dense  &  5.91  \hfill(24.97)    &3.05 \hfill(48.41)      & 147.72 (1)    \\ 
		large-sparse &  4.74  \hfill(17.52)    &2.30  \hfill(36.09)     & 83.07   (1)   \\ 
		small-sparse &  1.60  \hfill(2.40)    &0.50  \hfill(7.62)     & 3.83   (1)    \\ 
		small-dense  &  1.71  \hfill(4.53)    &0.58  \hfill(13.29)     & 7.78    (1)   \\ \hline
	\end{tabular}
	\caption{Comparison of (average) running times for the second set of synthetic datasets. All times are in seconds.}
	\label{tab:6}
\end{table}

\end{onlyfullversion}
	\section{Conclusion}

In this paper we gave approximation algorithms for the \CMM\ problem in various offline and online settings. Improving these approximation factors or showing matching lower bounds are natural open questions. There are existing algorithms that break the $1-1/e$ barrier for online bipartite matching (without group fairness constraints) \cite{mehta_online_iid,karande2011_online_matching}; obtaining similar bounds for online CMM is also an interesting open problem.


	
	
	\section*{Acknowledgements}
	We acknowledge some initial discussions with Ajay Saju Jacob.
	We are grateful to the anonymous reviewers for their comments. 
	AL was supported in part by SERB Award ECR/2017/003296 and a Pratiksha Trust Young Investigator Award. 
	MN and PN are supported in part by SERB Award CRG/2019/004757. 
	\bibliographystyle{amsalpha} 
	\bibliography{references}
	
	
\end{document}
